\DeclarePairedDelimiter\ceil{\lceil}{\rceil}
\DeclarePairedDelimiter\floor{\lfloor}{\rfloor}
\def\NAT@spacechar{~}
\newtheorem{observation}{Observation}
\newcommand{\nash}{Nash equilibrium\xspace}
\newcommand{\nashs}{Nash equilibria\xspace}
\newcommand{\norm}[1]{\|#1\|_1}
\newcommand{\ham}{\Delta}
\newcommand{\dist}{\text{dist}}
\newtheorem{proposition}{Proposition}
\newtheorem{definition}{Definition}
\newtheorem{theorem}{Theorem}
\newtheorem{lemma}{Lemma}
\newtheorem{corollary}{Corollary}
\title{Multi-Player Diffusion Games on Graph Classes}
\author[1]{Laurent Bulteau\footnote{Laurent Bulteau was supported by the Alexander von Humboldt
    Foundation, Bonn, Germany. Main work done while affiliated with TU~Berlin.}}
\author[2]{Vincent Froese\footnote{Vincent Froese was supported by the DFG, project DAMM (NI 369/13).}}
\author[3]{Nimrod Talmon\footnote{Nimrod Talmon was supported by DFG Research Training Group “Methods for Discrete Structures” (GRK 1408).  Main work done while affiliated with TU Berlin.}}
\affil[1]{\small IGM-LabInfo, CNRS UMR 8049, Universit\'e Paris-Est Marne-la-Vall\'ee, France\\
  \texttt{Laurent.Bulteau@u-pem.fr}}
\affil[2]{\small Institut f\"ur Softwaretechnik und Theoretische Informatik, TU Berlin, Germany\\
  \texttt{vincent.froese@tu-berlin.de}}
\affil[3]{\small Weizmann Institute of Science, Rehovot, Israel\\
  \texttt{nimrodtalmon77@gmail.com}}
\date{} 
\begin{document}

\tikzstyle{p1} = [draw, shape=circle, fill=white, inner sep=2pt]
\tikzstyle{p2} = [draw, shape=circle, fill=gray, inner sep=2pt]
\tikzstyle{p3} = [draw, shape=circle, fill=black, inner sep=2pt]
\tikzstyle{star1} = [draw, shape=star, star points=5, star point ratio=2.5, fill=white, inner sep=1pt]
\tikzstyle{star2} = [draw, shape=star, star points=5, star point ratio=2.5, fill=gray, inner sep=1pt]
\tikzstyle{star3} = [draw, shape=star, star points=5, star point ratio=2.5, fill=black, inner sep=1pt]

\maketitle

\begin{abstract}
We study competitive diffusion games on graphs introduced by
\citet{AFPT10} to model the spread of influence in social networks.
Extending results of \citet{roshanbin14} for two players, we
investigate the existence of pure \nashs for at least three players
on different classes of graphs including paths, cycles, grid
graphs and hypercubes; as a main contribution, we answer an open question proving
that there is no \nash for three players on $m\times n$ grids
with $\min\{m,n\}\geq 5$.
Further,
extending results of \citet{etesami2014complexity} for two players,
we prove the existence of pure \nashs for four players on every~$d$-dimensional hypercube.
\end{abstract}

\section{Introduction}
 Social networks,
 and the diffusion of information within them,
 yields an interesting and well-researched field of study.
 Among other models,
 competitive diffusion games have been introduced
 by~\citet{AFPT10} as a game-theoretic approach towards modelling
 the process of diffusion (or propagation) of influence (or
 information in general) in social networks. Such models have
 applications in ``viral marketing'' where several companies (or brands)
 compete in influencing as many customers (of products) or users (of
 technologies) as possible by initially selecting only a ``small''
 subset of target users that will ``infect'' a large number of other users.
 Herein, the network is modelled as an undirected graph where the
 vertices correspond to the users, with edges modelling influence
 relations between them. The companies,
 being the players of the corresponding diffusion game, choose an
 initial subset of target vertices which then influence other
 neighboring vertices via a certain propagation process.
 More concretely, a vertex adopts a company's product at some specific 
 time during the process if he is influenced by (that is, connected by an
 edge to) another vertex that already adopted this product.
 After adopting a product of one company,
 a vertex will never adopt any other product in the future.
 However, if a vertex gets influenced by several companies at the same
 time, then he will not adopt any of them and he is removed from the
 game. See~\autoref{sec:prelim} for the formal definitions of the game.

 In their initial work, \citet{AFPT10} studied
 how the existence of pure \nashs is influenced by the diameter of the underlying graph.
 Following this line of research,
 \citet{roshanbin14} investigated the existence of \nashs
 for competitive diffusion games with two players on several classes
 of graphs such as paths, cycles and grid graphs. Notably, she proved
 that on sufficiently large grids, there always exists a \nash for
 two players, further conjecturing that there is no \nash for three
 players on grids. We extend the results of
 \citet{roshanbin14} for two players to three or more
 players on paths, cycles and grid graphs, proving
 the conjectured non-existence of a pure \nash for three players on grids as a main
 result. 
 \citet{etesami2014complexity} also followed this line of research,
 by inverstigating the existence of \nashs for competitive diffusion games with two players on $d$-dimensional hypercubes.
 We extend their results by showing that there always exists a \nash for four players on any $d$-dimensional hypercube.
 
 An overview of our results is given in~\autoref{sec:results}.
 After introducing the preliminaries in~\autoref{sec:prelim},
 we discuss our results for paths and cycles in~\autoref{sec:paths_cycles},
 followed by the proof of our main contribution regarding grids in~\autoref{sec:grids}.
 We discuss hypercubes in~\autoref{sec:hypercubes} and
 finish with some statements concerning general graphs in~\autoref{sec:general}.
 
\subsection{Related Work}
\label{sec:relwork}
The study of influence maximization in social networks was initiated by~\citet{kempe2003maximizing}.
Several game-theoretic models have been suggested,
including our model of reference, introduced by~\citet{AFPT10}.
Some interesting generalizations of this model are the model by~\citet{tzoumas2012game},
who considered a more complex underlying diffusion process,
and the model studied by~\citet{etesami2014complexity},
allowing each player to choose multiple vertices.
\citet{durr2007nash} and~\citet{mavronicolas2008voronoi} studied so-called Voronoi games,
which are closely related to our model
(but not similar; there, players can share vertices).
Recently,
\citet{ito2015competitive} considered the competitive diffusion game on weighted graphs,
including negative weights.
Concerning our model,~\citet{AFPT10} claimed the existence of pure
\nashs for any number of players on graphs of diameter at most two,
however,~\citet{THS12} gave a counterexample consisting of a graph
with nine vertices and diameter two with no \nash for two players.

Our main point of reference is the work of~\citet{roshanbin14},
who studied the existence (and non-existence) of pure \nashs mainly
for two players on special graph classes
(paths, cycles, trees, unicycles, and grids);
indeed, our work can be seen as an extension of that work to more than two
players.
\citet{small2012information} already showed that there is a
\nash for any number of players on any star or clique.
\citet{SO13} proved that there is always a pure \nash for two
players on a tree, but not always for more than two players.
\citet{janssen2014finding} considered safe strategies on trees
and spider graphs, where a safe strategy is a strategy which maximizes the minimum payoff
of a certain player, when the minimum is taken over the possible unknown actions of the other players.

\subsection{Our Results}
\label{sec:results}

We begin by characterizing the existence of \nashs for paths and cycles,
showing that, except for three players on paths of length at least six,
a \nash exists for any number of players playing on any such graph
(Theorem~\ref{thm:paths} and~\ref{thm:cycles}).
We then prove Conjecture~1 of~\citet{roshanbin14},
showing that there is no \nash for three players on $G_{m\times n}$,
as long as both $m$ and~$n$ are at least $5$ (\autoref{thm:grids}).
Then,
we prove the existence of \nashs for four players on any $d$-dimensional hypercube (\autoref{thm:hypercubes}).
Finally, we investigate the minimum number of vertices such that
there is an arbitrary graph with no \nash for~$k$ players.
We prove an upper bound showing that there always exists a tree
on~$\floor{\frac{3}{2}k}+2$ vertices with no \nash for~$k$ players (\autoref{thm:general}).

\subsection{Preliminaries}
\label{sec:prelim}

\paragraph*{Notation.}
For~$i, j \in \mathbb{N}$ with~$i < j$, we define $[i, j]:=\{i,\ldots,j\}$
and $[i]:=\{1,\ldots,i\}$.
We consider simple, finite, undirected graphs~$G=(V,E)$
with vertex set~$V$ and edge set~$E\subseteq\{\{u,v\}\mid u,v\in V\}$.
For two vertices~$u,v\in V$, we define the \emph{distance}~$\dist_G(u,v)$ between~$u$ and~$v$ to be the length of a shortest path from~$u$ to~$v$ in~$G$.

A \emph{path}~$P_n=(V,E)$ on~$n$ vertices is the graph
with~$V=[n]$ and $E=\{\{i,i+1\}\mid i\in[n-1]\}$.
A \emph{cycle}~$C_n=(V,E)$ on~$n$ vertices is the graph
with~$V=[n]$ and $E=\{\{i,i+1\}\mid i\in[n-1]\}\cup\{\{n,1\}\}$.
For $m,n \in\mathbb{N}$, the $m\times n$ \emph{grid}~$G_{m\times n}=(V,E)$
is a graph with vertices $V=[m]\times[n]$ and edges
$E=\{\{(x,y),(x',y')\}\mid |x-x'|+|y-y'| = 1\}$.
We use the term \emph{position} for a vertex~$v\in V$.
Note that the distance of two positions~$v=(x,y)$, $v'=(x',y')\in V$
is~$\dist_{G_{m\times n}}(v,v') = \norm{v-v'}:=|x-x'|+|y-y'|$.
For $d\in\mathbb{N}$, $d\ge 1$, the $d$-\emph{dimensional hypercube}~$H_d=(V,E)$ is defined on the vertex set~$V=\{0,1\}^d$, that is, a vertex~$x=x_1\ldots x_d\in V$ is a binary string of length~$d$.
The set of edges is defined as~$E=\{\{x,y\} \mid \ham(x,y)=1\}$, where $\ham(x,y):=|\{i\in[d]\mid x_i \neq y_i\}|$ is the \emph{Hamming distance} of~$x$ and~$y$, that is, the number of positions in which~$x$ and~$y$ have different bits.
Note that~$\dist_{H_d}(x,y)=\ham(x,y)$.

\paragraph*{Diffusion Game on Graphs.}
A \emph{competitive diffusion game}~$\Gamma = (G, k)$ is defined by an undirected
graph~$G=(V,E)$ and a number~$k$ of players (we name the players Player~1,~$\ldots$~,~Player~$k$), each having its
distinct color in~$[k]$.
The \emph{strategy space} of each player is~$V$, such that
each Player~$i$ selects a single vertex~$v_i\in V$ at time~0,
which is then colored by her color~$i$. If two players choose
the same vertex~$v$, then this vertex is removed from the graph.
For Player $i$, we use the terms strategy and position interchangeably,
to mean its chosen vertex.
A \emph{strategy profile} is a tuple~$(v_1,\ldots,v_k)\in V^k$
containing the initially chosen vertex for each player.
The \emph{payoff}~$U_i(v_1,\ldots, v_k)$ of Player~$i$
is the number of vertices with color~$i$ after the following propagation process.
At time~$t+1$, any so far uncolored vertex that has only uncolored neighbors
and neighbors colored in~$i$ (and no neighbors with other colors~$j\in [k]\setminus\{i\}$)
is colored in~$i$. Any uncolored vertex with at least two different
colors among its neighbors is removed from the graph.
The process terminates when the coloring of the vertices
does not change between consecutive steps.
A strategy profile~$(v_1,\ldots,v_k)$ is a (pure) \emph{\nash}
if, for any player~$i\in[k]$ and any vertex~$v'\in V$, it holds that
$U_i(v_1,\ldots,v_{i-1},v',v_{i+1},\ldots,v_k) \le U_i(v_1,\ldots,v_k)$.

\section{Paths and Cycles}
\label{sec:paths_cycles}

In this section,
we fully characterize the existence of \nashs on paths and cycles,
for any number $k$ of players.
\begin{theorem}\label{thm:paths}
  For any $k \in \mathbb{N}$ and any $n \in \mathbb{N}$,
  there is a \nash for~$k$ players on $P_n$, except for $k = 3$ and $n \geq 6$.
\end{theorem}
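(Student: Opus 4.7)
The plan is to split the argument based on $(k,n)$. For $k\in\{1,2\}$, a \nash is trivial for $k=1$ and is provided by \citet{roshanbin14} for $k=2$. For $k=3$ with $n\le 5$, I would simply exhibit explicit \nashs (for instance $(1,2,3)$ for $n\in\{3,4\}$ and $(2,3,4)$ for $n=5$, together with trivial profiles for $n\in\{1,2\}$) and verify each by direct inspection. For $k\ge 4$, I would construct a \nash by clustering players at the two ends of the path: e.g.\ for $k=4$ a profile of the form $(a,a+1,n-a,n-a+1)$ for a suitable $a=a(n)$, with analogous clustered constructions for larger $k$; in such profiles each paired player is trapped next to a same-end neighbor and the outer players exactly claim the path's far ends, rendering every single-vertex deviation non-profitable.

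The main content is the non-existence claim for $k=3$, $n\ge 6$, which I would prove by contradiction. Suppose $(v_1,v_2,v_3)$ is a \nash. First I argue that the three positions are pairwise distinct: if two players shared a vertex $u$, that vertex would be removed and both colliding players would earn $0$, but since $n\ge 6$ each of them could relocate to some unoccupied vertex $v\notin\{u,v_3\}$ and secure payoff $\ge 1$. Letting $A<B<C$ denote the sorted positions, the leftmost player's payoff is $A-1+\lceil(B-A)/2\rceil$. By relocating to $B-1$ this player becomes the new leftmost adjacent to the middle player and captures $[1,B-1]$ entirely, for payoff $B-1$; since $B-A\ge 2$ implies $\lceil(B-A)/2\rceil<B-A$, this is a strict improvement whenever $B-A\ge 2$. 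So $B=A+1$, and by symmetry $C=B+1$, forcing the positions to be three consecutive vertices $(a,a+1,a+2)$.

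With this structure the middle player's payoff is exactly $1$ (the vertex $a+1$ only). The key step is to show this middle player can always escape. If $a\ge 2$, relocating to $a-1$ yields payoff $a-1$ (as new leftmost, adjacent to position $a$), so a \nash requires $a\le 2$; symmetrically, if $a+3\le n$, relocating to $a+3$ yields payoff $n-a-2$, requiring $a\ge n-3$. The generic case $2\le a\le n-3$ thus gives $n\le 5$, and the two boundary sub-cases ($a=1$, where only the rightward jump is possible, and $a=n-2$, where only the leftward jump is possible) each likewise yield $n\le 5$. This contradicts $n\ge 6$.

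The main obstacle I expect is bookkeeping rather than insight: the boundary sub-cases at the end of the non-existence argument, and the existence construction for $k\ge 4$ across different values of $n$ (including small $n$, where multiple players must share a vertex so that several vertices are removed), require a careful enumeration, though neither goes beyond the ``jump to a neighbor of a neighbor'' idea that drives the proof.
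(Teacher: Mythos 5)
Your non-existence argument for $k=3$, $n\ge 6$ is correct and is essentially the paper's: you force the three positions to be consecutive by letting an outer player jump next to the middle player (the payoff formula $A-1+\lceil(B-A)/2\rceil$ and the gain $\lfloor(B-A)/2\rfloor\ge 1$ both check out), and then let the payoff-$1$ middle player escape to $a-1$ or $a+3$, yielding $n\le 5$. The explicit profiles for $k=3$, $n\le 5$ and the handling of coinciding positions are also fine.

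The gap is in the existence construction for $k\ge 4$. First, ``clustering players at the two ends of the path'' does not generalize beyond $k=4$: for $k\ge 6$ you cannot place three or more players near each end, since the middle member of any such triple is sandwiched with payoff $1$ and profitably escapes to the long empty middle of the path. What actually works --- and what the paper does --- is to form $k/2$ \emph{adjacent pairs} and spread the pairs evenly along the whole path, so that each player is blocked by her partner on one side and owns a segment of length about $\lfloor n/k\rfloor$ on the other; your $k=4$ profile $(a,a+1,n-a,n-a+1)$ with $a\approx n/4$ happens to be an instance of this, but ``at the two ends'' points in the wrong direction for larger $k$. Second, and more seriously, you never address odd $k\ge 5$, where pairing everyone is impossible. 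The paper needs a separate lemma here: one player is left unpaired between two consecutive pairs, collects roughly twice the per-player share, and verifying that \emph{she} cannot improve requires an extra argument (a parity observation that the two gaps flanking her cannot both be even, so her current payoff already equals the best she could obtain anywhere in that stretch). This is not an ``analogous clustered construction'' but a genuinely different configuration with its own verification, so as written your plan does not cover $k=5,7,9,\ldots$.
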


The general idea of the proof is to pair the players and distribute these pairs evenly.
In the rest of the section,
we prove three Lemmas whose straight-forward combination proves~\autoref{thm:paths}.

  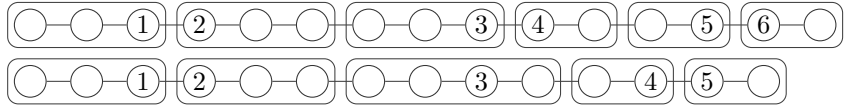
\begin{figure}[t]
    \center
    \begin{tikzpicture}[draw=black!75, scale=0.75,-]
      \tikzstyle{vertex}=[circle,draw=black!80,minimum size=12pt,inner sep=0pt]

      \foreach [count=\i] \pos / \text in {
        {(1, 3)}/,
        {(2, 3)}/,
        {(3, 3)}/$1$,
        {(4, 3)}/$2$,
        {(5, 3)}/,
        {(6, 3)}/,
        {(7, 3)}/,
        {(8, 3)}/,
        {(9, 3)}/$3$,
        {(10,3)}/$4$,
        {(11,3)}/,
        {(12,3)}/,
        {(13,3)}/$5$,
        {(14,3)}/$6$,
        {(15,3)}/
      }
      {
        \node[vertex] (V\i) at \pos {\text};
      }
      
      \draw [rounded corners] (0.6, 2.6) rectangle (3.4, 3.4);
      \draw [rounded corners] (3.6, 2.6) rectangle (6.4, 3.4);
      \draw [rounded corners] (6.6, 2.6) rectangle (9.4, 3.4);
      \draw [rounded corners] (9.6, 2.6) rectangle (11.4, 3.4);
      \draw [rounded corners] (11.6, 2.6) rectangle (13.4, 3.4);
      \draw [rounded corners] (13.6, 2.6) rectangle (15.4, 3.4);

      \foreach \i / \j in {1/2,2/3,3/4,4/5,5/6,6/7,7/8,8/9,9/10,10/11,11/12,12/13,13/14,14/15} {
        \path[] (V\i) edge (V\j);
      }

      \foreach [count=\i] \pos / \text in {
        {(1, 2)}/,
        {(2, 2)}/,
        {(3, 2)}/$1$,
        {(4, 2)}/$2$,
        {(5, 2)}/,
        {(6, 2)}/,
        {(7, 2)}/,
        {(8, 2)}/,
        {(9, 2)}/$3$,
        {(10,2)}/,
        {(11,2)}/,
        {(12,2)}/$4$,
        {(13,2)}/$5$,
        {(14,2)}/
      }
      {
        \node[vertex] (V\i) at \pos {\text};
      }

      \foreach \i / \j in {1/2,2/3,3/4,4/5,5/6,6/7,7/8,8/9,9/10,10/11,11/12,12/13,13/14} {
        \path[] (V\i) edge (V\j);
      }
      
      \draw [rounded corners] (0.6, 1.6) rectangle (3.4, 2.4);
      \draw [rounded corners] (3.6, 1.6) rectangle (6.4, 2.4);
      \draw [rounded corners] (6.6, 1.6) rectangle (10.4, 2.4);
      \draw [rounded corners] (10.6, 1.6) rectangle (12.4, 2.4);
      \draw [rounded corners] (12.6, 1.6) rectangle (14.4, 2.4);
      
    \end{tikzpicture}
    \caption{Illustrations for~\autoref{thm:paths},
    showing
    a \nash for $6$ players on $P_{15}$ (top)
    and a \nash for $5$ players on $P_{14}$ (bottom).
    The boxes show the colored regions of each player.}\label{fig:pathsone}
  \end{figure}

\begin{lemma}\label{lem:evenk}
  For any even $k \in \mathbb{N}$ and any $n\in\mathbb{N}$,
  there is a \nash for~$k$ players on $P_n$.
\end{lemma}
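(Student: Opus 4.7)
The plan is to exhibit an explicit Nash profile in which the $k = 2\ell$ players are partitioned into $\ell$ pairs, each pair occupying two consecutive vertices of $P_n$, with the $\ell$ pairs spaced along the path as in the top half of \autoref{fig:pathsone}. Because the two members of a pair block propagation on the edge between them, the left member colors only leftward and the right member only rightward; each pair therefore carves out a single contiguous territory, which it splits according to the pair's position inside that territory.

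I would fix the placement $p_1 < p_2 < \cdots < p_\ell$ (with pair~$i$ at vertices $p_i$ and $p_i+1$) so that every player's payoff lies in $\{\lfloor n/k \rfloor, \lceil n/k \rceil\}$. Writing $n = vk + r$ with $0 \le r < k$, this is achieved by giving the $r$ ``extra'' vertices to the leftmost players: inflate the left end-segment by one (if $r \ge 1$) and then pump each subsequent gap by two (so both players bordering that gap gain one), stopping once $r$ extras have been allocated and, if needed, placing one at the right end-segment. A direct calculation yields the payoffs: player~$1$ obtains $p_1$, player~$k$ obtains $n - p_\ell$, and each of the two players flanking the gap between pair~$i$ and pair~$i+1$ obtains $1 + \lfloor (p_{i+1}-p_i-2)/2 \rfloor$.

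For the Nash verification I would bound the payoff obtained by any single-player deviation. Moving onto an occupied vertex gives payoff $0$. Moving to an unoccupied vertex $q$ lying between two nearest occupied vertices $a < b$ yields at most $1 + \lfloor (b-a-2)/2 \rfloor$ by a standard race argument on the path, plus the corresponding end-segment if $q$ lies past the outermost occupied vertex. The crucial observation is that when a player abandons her pair, her former partner absorbs the vacated half-territory, so the largest gap available to the deviator has length at most approximately twice her original payoff; combined with the fact that in equilibrium every player already receives $\lfloor n/k\rfloor$ or $\lceil n/k\rceil$, this shows no deviation strictly improves the payoff.

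The main obstacle is the parity-driven case proliferation: each gap $p_{i+1}-p_i$ may be even or odd, the deviator may belong to an end pair or an interior pair, and $q$ may fall in several qualitatively different regions; each combination warrants its own arithmetic check. A separate subtlety is the degenerate regime $n < k$, where $\ell$ disjoint pairs do not fit on the path; there I would replace the construction by a collision configuration (piling excess players on a single vertex so that every alternative vertex is already saturated) and verify the Nash property by direct inspection.
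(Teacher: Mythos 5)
Your proposal matches the paper's proof in all essentials: the same pairing construction with extras allocated to the leftmost players so every payoff lands in $\{\lfloor n/k\rfloor,\lceil n/k\rceil\}$, the same race/gap bound for deviations (including the observation that an abandoned partner absorbs the vacated territory), and the same saturation argument for the degenerate case $n\le k$. No substantive differences to report.
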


\begin{proof}
  If $n \leq k$,
  then a strategy profile where each vertex of the path is chosen by
  at least one player is clearly a \nash.

  Otherwise, if $n > k$, then the idea is to build pairs of players,
  which are then placed such that two paired players are neighbors
  and the distance of any two consecutive pairs is roughly equal (specifically, differs
  by at most two). See \autoref{fig:pathsone} for an example.
  Intuitively, this yields a \nash since each player
  obtains roughly the same payoff (specifically, differing by at most one),
  therefore no player can improve.
  Since we have $n$ vertices, we want each player's payoff
  to be at least~$z:=\floor{\frac{n}{k}}$.
  This leaves~$r:=n\mod k$ other vertices, which we
  distribute between the first~$r$ players such that the payoff of
  any player is at most~$z+1$.
  This can be achieved as follows.
  Let $p_i \in [n]$ denote the position of Player~$i$,
  that is,
  the index of the chosen vertex on the path.
  We define
  \[
    p_i :=
    \begin{cases}
      z\cdot i +\min\{i,r\} & \text{if } i \text{ is odd}, \\
      p_{i - 1} + 1  & \text{if } i \text{ is even}.
    \end{cases}
  \]
  Note that, by construction, it holds that~$p_1\in \{z,z+1\}$ and
  $p_k=n-z+1$. Moreover, for each odd indexed player~$i\ge 3$, we
  have that~$2z-1\le p_i-p_{i-1}\le 2z+1$.
  We claim that~$u_i:=U_i(p_1,\ldots,p_k)\in\{z,z+1\}$ holds for
  each~$i\in[k]$.
  Clearly, $u_1=p_1\in\{z,z+1\}$ and $u_k=n-p_k+1=z$.
  For all odd~$i\ge 3$, it is not hard to see that
  $u_i=u_{i-1}=1+\floor{(p_i-p_{i-1}-1)/2}\in\{z,z+1\}$, which
  proves the claim.

  To see that the strategy profile~$(p_1,\ldots,p_k)$ is a \nash, consider
  an arbitrary player~$i$ and any other
  strategy~$p_i'\in[n]$ that she picks. Clearly, we can assume $p_i'\neq p_j$
  for all~$j\neq i$ since otherwise Player~$i$'s payoff is zero.
  If~$p_i' < p_1$ or $p_i'>p_k$, then Player~$i$ gets a payoff of at
  most~$z$.
  If~$p_j < p_i' < p_{j+1}$ for some even $j\in[2,k-2]$, then
  her payoff is at most $1+\floor{(p_{j+1}-p_j-2)/2}\le z$.
\end{proof}
We can modify the construction given in the proof of
\autoref{lem:evenk} to also work for odd numbers~$k$ greater than three.

\begin{lemma}
  \label{lem:oddk}
  For any odd $k > 3 \in \mathbb{N}$
  and for any $n\in\mathbb{N}$,
  there is a \nash for $k$ players on $P_n$.
\end{lemma}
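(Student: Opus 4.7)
The plan is to extend Lemma~\ref{lem:evenk}'s pair-based strategy to odd $k$ by designating one player as an unpaired \emph{singleton}. Since $k \geq 5$ is odd, I form $(k-1)/2$ pairs from $k-1$ of the players and leave the remaining one as the singleton. The guiding observation is that a singleton placed between two other players on the path controls a region of size equal to half the distance between those two neighbours --- roughly twice what a paired player controls. I therefore target a profile in which every paired player has payoff in $\{z, z+1\}$ and the singleton has payoff close to $2z$, where $z := \floor{n/(k+1)}$.

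For $n \leq k$, I handle the case directly: place the $k$ players so that every vertex of $P_n$ carries at least one player; then each payoff is $0$ or $1$, and any deviation either collides with an already-occupied vertex or cannot exceed $1$, so no deviation improves.

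For $n > k$, I arrange the pairs and the singleton along the path, interleaved with empty blocks $\alpha$ (before the first pair), $b_j$ (between consecutive entities), and $c$ (after the last pair). The block sizes are chosen so that (i)~every paired player has payoff in $\{z, z+1\}$, (ii)~the distance between any two consecutive entity-positions (pair-to-pair, and each half of the singleton's gap) is at most $2z+1$, and (iii)~the singleton sits at a vertex whose distances to her two immediate pair-neighbours have the correct parities to avoid equidistant vertices being removed during propagation, giving her the maximum payoff in her gap, namely at least $2z$.

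To verify the Nash property, any paired player's deviation into a gap of distance $g$ yields payoff at most $\lfloor g/2 \rfloor \leq z$ by~(ii), matching or below her current payoff from~(i). The singleton's deviation into any non-singleton gap yields payoff at most $z$, strictly below her current $\geq 2z$. The main obstacle is condition~(ii): a naive transfer of Lemma~\ref{lem:evenk}'s position formula (apply it for $k+1$ players and merge the middle pair into the singleton) leaves, for certain remainders $r := n - (k+1)z$, a singleton gap whose longer half exceeds $2z+1$, allowing some paired player elsewhere to profitably move into it. The fix is to \emph{rebalance} the empty-block sizes --- redistributing the remainder more evenly across the boundary blocks rather than loading it into the first few --- so that all distances remain bounded by $2z+1$.
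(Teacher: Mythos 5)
Your overall strategy---pair $k-1$ players, leave one singleton between two pairs, and tune the empty blocks so that paired payoffs lie in $\{z,z+1\}$, all consecutive gaps are at most $2z+1$, and the singleton's two gaps are not both even---is exactly the approach of the paper, and your verification of the Nash property \emph{given} your conditions (i)--(iii) is sound. (The only overclaim there is the singleton's payoff of ``at least $2z$'': with gaps $2z-1$ and $2z$ she gets only $2z-1$, which still suffices, since any deviation outside her region yields at most $z$ and, by the parity condition, no position inside her region beats her current one.)

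The genuine gap is that you never exhibit a profile satisfying (i)--(iii), nor argue that one exists. These conditions form a nontrivial system of constraints: the parities of the singleton's two gaps are determined by the same block sizes that must also absorb the remainder $r=n-(k+1)z$ while keeping every gap in $[2z-1,2z+1]$, and feasibility has to be checked for every residue $r$. You correctly observe that naively merging a pair of the Lemma~\ref{lem:evenk} profile can leave a gap of size $2z+2$, but the sentence ``the fix is to rebalance the empty-block sizes'' is precisely the step that constitutes the proof, and it is not carried out. The paper resolves this concretely: it takes the even profile $(p_1',\ldots,p_{k+1}')$ for $k+1$ players on $P_{n+1}$, deletes $p_{k-1}'$, and shifts the last pair to $(p_k'-1,p_{k+1}'-1)$, so that both gaps incident to the singleton at $p_{k-2}'$ lie in $\{2z-1,2z,2z+1\}$; it then reads off from the explicit formula $p_i'=zi+\min\{i,r\}$ that these two gaps cannot both be even. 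To complete your argument you would need to supply an explicit assignment of block sizes (or such a reduction) and verify the parity claim from it.
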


\begin{proof}
  We give a strategy profile based on the construction for an even
  number of players (proof of~\autoref{lem:evenk}).
  The idea is to pair the players,
  placing the remaining lonely player between two consecutive pairs.
  
  This is best explained using a reduction to the even case.
  Specifically, given the strategy profile~$(p_1',\ldots,p_{k+1}')$
  for an even number~$k+1$ of players on~$P_{n+1}$
  as constructed in the proof of \autoref{lem:evenk},
  we define the strategy
  profile~$(p_1,\ldots,p_k):=(p_1',\ldots,p_{k-2}',p_k'-1,p_{k+1}'-1)$.
  To see why this  results in a \nash,
  let~$z:=\floor{(n+1)/(k+1)}$ and note that by construction it holds that
  $p_1 \in \{z,z+1\}$,
  $p_k= n-z+1$,
  and~$2z-1\le p_{i+1}-p_i\le 2z+1$ for all~$i\in[2,k-1]$.
  Moreover, each player receives a payoff
  of at least~$z$, therefore all players (except for
  Player~$(k-2)$) cannot improve by the same arguments as in the proof
  of \autoref{lem:evenk}.
  Regarding Player~$(k-2)$, note that her payoff is
  \[1+\floor{(p_{k-1}-p_{k-2}-1)/2} + \floor{(p_{k-2}-p_{k-3}-1)/2}\ge
  2z-1.\]
  Hence, she clearly cannot improve
  by choosing any position outside of~$[p_{k-3},p_{k-1}]$.
  Also, she cannot improve by choosing any other position
  in~$[p_{k-3},p_{k-1}]$. To see this, note that her maximum
  payoff from any position in~$[p_{k-3},p_{k-1}]$ is
  \[1+\floor{(p_{k-1}-p_{k-3}-2)/2}=1+\floor{(p_{k-1}-p_{k-2}-1+p_{k-2}-p_{k-3}-1)/2},\]
  which is equal to the above payoff since~$p_{k-1}-p_{k-2}$ and
  $p_{k-2}-p_{k-3}$ cannot both be even, by construction.
\end{proof}
It remains to discuss the fairly simple (non)-existence of \nashs
for three players.
Note that \citet{roshanbin14} already stated without proof that there
is no \nash for three players on~$G_{2\times n}$ and~$G_{3\times n}$
and \citet{SO13} showed that there is no \nash for three players
on~$P_7$. For the sake of completeness, we prove the following lemma.

\begin{lemma}\label{lem:threeonpath}
  For three players, there is a \nash on~$P_n$ if and only if~$n\le 5$. 
\end{lemma}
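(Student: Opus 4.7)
The plan is to prove both directions separately. For the ``if'' direction ($n \le 5$), I would exhibit an explicit \nash for each small $n$ and verify by inspection that no deviation strictly improves any player's payoff: for $n = 1$, all three players pick the unique vertex; for $n \in \{2,3\}$, spread the three strategies over the available vertices (duplicating one when $n=2$); for $n = 4$, use positions $(1,2,3)$ with payoffs $(1,1,2)$; for $n = 5$, use the symmetric profile $(2,3,4)$ with payoffs $(2,1,2)$. Each configuration offers only a handful of candidate deviations to check.

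For the ``only if'' direction ($n \ge 6$), I would assume a \nash and derive a contradiction. First, if any two players share a vertex, those players each score $0$ and either can strictly improve by moving to one of the $\ge n-1$ other vertices; hence the three strategies are pairwise distinct, say $p < q < r$. Writing $\alpha := q - p$ and $\beta := r - q$, the propagation rules give the players at $p$, $q$, $r$ respective payoffs
\[
p+\lfloor(\alpha-1)/2\rfloor,\quad 1+\lfloor(\alpha-1)/2\rfloor+\lfloor(\beta-1)/2\rfloor,\quad (n-r)+1+\lfloor(\beta-1)/2\rfloor.
\]

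The central step is a ``squeeze'' deviation. If $\alpha \ge 2$, the leftmost player moves from $p$ to the free vertex $q-1$; she remains leftmost of three distinct positions and so collects all $q-1$ vertices to her left, a payoff of $p+\alpha-1$ which exceeds her old one by $\lceil(\alpha-1)/2\rceil \ge 1$. Symmetrically, $\beta \ge 2$ lets the rightmost player strictly improve by moving to $q+1$. Thus a \nash forces $\alpha = \beta = 1$, i.e., the three positions are the consecutive vertices $(p, p+1, p+2)$, making the middle player's payoff exactly $1$. Because $n \ge 6$, the $n-3 \ge 3$ outside vertices split into $p-1$ on the left and $n-p-2$ on the right. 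If $p \le n-4$, the middle player moves to $p+3$, becoming the new rightmost, and scores $n-p-2 \ge 2$; otherwise $p \ge n-3$, so $p \ge 3$, and she instead moves to $p-1$ for score $p-1 \ge n-4 \ge 2$. Either way she strictly improves on $1$, contradicting the \nash property.

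The main obstacle is isolating the squeeze deviation that forces $\alpha = \beta = 1$; once in hand, the remainder is a two-case arithmetic check on the middle player's two outside options. Everything else amounts to routine verification on small configurations and elementary payoff arithmetic on the path.
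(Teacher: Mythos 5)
Your proof is correct and follows essentially the same strategy as the paper's: explicit equilibria for $n\le 5$, and for $n\ge 6$ the same two-pronged argument that a gap ($\alpha\ge 2$ or $\beta\ge 2$) lets an outer player squeeze in next to the middle one (the paper's move to $p_2\pm 1$), while the all-consecutive configuration lets the middle player profitably jump outside. Your version merely adds explicit payoff formulas and justifies the distinctness of the three positions, which the paper leaves implicit.
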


\begin{proof}
  If $n \leq 3$, then a strategy profile where each vertex of the path
  is chosen by at least one player is clearly a \nash.
  For $n \in \{4, 5\}$, the strategy profile~$(2,3,4)$ is a \nash.

  To see that there is no \nash for~$n\ge 6$,
  consider an arbitrary strategy profile~$(p_1,p_2,p_3)$.
  Without loss of generality, we can assume that
  $p_1 < p_2 < p_3$ and consider the following two cases.
  First, we assume that $p_2 = p_1 + 1$ and $p_3 = p_2 + 1$.
  If $p_1 > 2$, then Player~2 increases her payoff by choosing $p_1 - 1$.
  Otherwise, it holds that $p_3 < n - 1$ and Player~2 increases her payoff
  by moving to $p_3 + 1$.
  Therefore, this case does not yield a \nash.
  For the remaining case, it holds that $p_1 < p_2 - 1$ or $p_3 > p_2 + 1$.
  If $p_1 < p_2 - 1$, then Player~1 increases her payoff by moving to $p_2 - 1$,
  while if $p_3 > p_2 + 1$, then Player~3 increases her payoff by moving to $p_2 + 1$.
  Thus, this case does not yield a \nash as well, and we are done.
\end{proof}

We close this section with the following result considering cycles.
Interestingly, for cycles there exists a \nash also for three players.

\begin{theorem}
  \label{thm:cycles}
  For any $k, n\in\mathbb{N}$,
  there is a \nash for $k$ players on~$C_n$.
\end{theorem}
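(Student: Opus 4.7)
The plan is to exhibit an explicit \nash for every pair $(k, n)$, split into three regimes.

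For $n \leq k$, I would use the strategy profile in which every vertex of $C_n$ is chosen by at least one player. Every vertex is then either uniquely colored (payoff~$1$) or removed by a collision (payoff~$0$ for its occupants), and any deviator can only move to another occupied vertex, hence still ends up on a removed vertex with payoff~$0$.

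For $n > k$ with $k \neq 3$, I would carry over the equidistant-pairs construction of \autoref{lem:evenk} and \autoref{lem:oddk} to the cycle: place $\lfloor k/2 \rfloor$ adjacent pairs (and, when $k \geq 5$ is odd, one unpaired singleton) around $C_n$ so that the arcs between consecutive units (pairs or singleton) are as balanced as possible. Because each player's payoff is determined only by the two adjacent arc lengths and because the maximum payoff obtainable by entering any single arc of length~$g$ is at most $\lfloor g/2 \rfloor$, the same floor-based deviation argument used in the path case shows no player can improve by moving.

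The main obstacle is the remaining case $k = 3$ with $n \geq 4$, where \autoref{lem:threeonpath} tells us no such \nash exists on $P_n$ for $n \geq 6$, so the cycle argument must genuinely rely on the absence of endpoints. I would place Players~$1$ and~$2$ at positions~$1$ and~$2$ (forming a pair) and Player~$3$ at some position~$b$ roughly opposite the pair, giving the three gaps $(1, b-2, n-b+1)$. Writing each payoff in terms of $\lfloor (b-3)/2 \rfloor$ and $\lfloor (n-b)/2 \rfloor$, the Nash condition reduces to three inequalities on $b$: Player~$1$ not jumping into the middle gap (roughly $b \leq (n+4)/2$), Player~$2$ symmetrically not jumping into the far gap (roughly $b \geq (n+2)/2$), and Player~$3$ not wanting to vacate~$b$ and re-split her combined $(n-1)$-arc (a parity condition forbidding ``$n$ odd with $b$ even''). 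The delicate step is the last condition, which requires careful parity bookkeeping; a short case analysis on $n \bmod 4$ then shows a valid $b \in \{\lceil n/2 \rceil + 1, \lceil n/2 \rceil + 2\}$ always exists.
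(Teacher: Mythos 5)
Your proposal is correct and follows essentially the same route as the paper: reuse the paired-players constructions from Lemmas~\ref{lem:evenk} and~\ref{lem:oddk} for $k\neq 3$, and for $k=3$ place an adjacent pair together with a third player roughly antipodal to it, tuning that position by a parity analysis (the paper uses $p_1=1$, $p_2=n$, and $p_3=\lfloor n/2\rfloor$ or $\lceil n/2\rceil$ depending on $n\bmod 4$, which is exactly your configuration up to rotation). The paper leaves the $k=3$ verification as ``not hard to check,'' so your explicit parity bookkeeping is, if anything, more detailed than the published argument.
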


\begin{proof}
  It is an easy observation that the constructions
  given in the proofs of Lemma~\ref{lem:evenk} and~\ref{lem:oddk}
  also yield \nashs for cycles, that is,
  when the two endpoints of the path are connected by an edge.
  Thus, it remains to show a \nash for~$k=3$ players for any~$C_n$.
  We set~$p_1:=1$, $p_2:=n$ and
  \[
    p_3 :=
    \begin{cases}
      \floor{n/2} & \text{if } n\mod 4 = 1, \\
      \ceil{n/2}  & \text{else}.
    \end{cases}
  \]
  It is not hard to check that~$(p_1,p_2,p_3)$ is a \nash.
\end{proof}

\section{Grid Graphs}
\label{sec:grids}

In this section we consider three players on the $m\times n$
grid~$G_{m\times n}$
and we prove the following theorem.

\begin{theorem}\label{thm:grids}
  If $m \geq 5$ and $n \geq 5$,
  then there is no \nash for three players on $G_{m\times n}$.
\end{theorem}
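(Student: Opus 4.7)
The plan is to derive a contradiction by supposing $(p_1,p_2,p_3)$ is a Nash equilibrium on $G_{m\times n}$ with $m,n\ge 5$ and exhibiting, for every possible such configuration, a beneficial deviation for some player. A key preliminary observation is that, for three positions on a grid, the payoff of Player~$i$ is essentially the number of vertices whose $L_1$ distance to $p_i$ is strictly less than to $p_j$ for every $j\neq i$: vertices equidistant to two or more players are killed simultaneously by multiple BFS waves and get removed. So each player's territory is (up to boundary effects on the equidistant diagonals of slope $\pm 1$) an $L_1$-Voronoi cell of its generator.

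First, I would eliminate degenerate configurations. If two players coincide or are adjacent, the third can often relocate next to them and swallow a large block, violating equilibrium. If all three players sit on a single row (or column), then the $L_1$-cell boundaries become vertical (resp.\ horizontal), so each player's territory is a strip of full columns and the whole situation projects onto the one-dimensional problem. By~\autoref{lem:threeonpath} no Nash exists on $P_n$ for $n\ge 6$, and the same improving moves translate to the grid because the strip structure is preserved when the deviator stays on the same line; extra care is needed to check that moving off the row cannot be strictly better, which follows from the fact that leaving the line only trades one column for at most equally many rows when $m,n\ge 5$.

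Second, and this is the core of the argument, I would handle the proper triangle case. Here I would identify a "middle" player---for instance, the one whose $x$-coordinate lies (weakly) between the other two, breaking ties by the $y$-coordinate---and show that their Voronoi cell is sandwiched between two diagonal half-planes, hence has bounded payoff $U_\text{mid}$. Then I would produce an improving deviation for this middle player: typically the move is to a vertex immediately next to one of the two flanking players, so that after the move the middle player captures essentially the entire former cell of that flanker (minus a thin diagonal strip) while retaining a chunk pointing away. The hypothesis $m,n\ge 5$ is used precisely to guarantee that there is enough room for the captured block to be strictly larger than $U_\text{mid}$. A complementary argument is needed when the triangle is very "thin" (nearly collinear but not exactly): there the middle player's cell is only a few diagonal stripes wide, and the improving move is a shift along the long direction of the triangle.

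The main obstacle is organising the triangle case: the $L_1$-Voronoi diagram of three points has a number of combinatorial types depending on the relative ordering of coordinates, on whether pairwise coordinate-differences are even or odd (which determines whether the bisector is a line of vertices or a line of edges, and therefore whether many vertices are removed), and on how close the triangle comes to a boundary of the grid. A clean way to reduce the bookkeeping is to exploit the symmetry group of the grid (reflections in the two axes and, when $m=n$, the diagonal) to fix Player~1 in a canonical "corner sector", and then to reduce the remaining freedom to a small, explicit list of cases; for each case one constructs a concrete improving deviation and verifies the payoff inequality by comparing $L_1$ cell sizes, using the inequalities $m,n\ge 5$ to close the gap.
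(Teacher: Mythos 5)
Your overall strategy (assume a Nash equilibrium, classify configurations up to the grid symmetries, exhibit an improving deviation in each case) matches the paper's, but your ``key preliminary observation'' --- that each player's payoff equals the size of her strict $L_1$-Voronoi cell because equidistant vertices are always removed --- is false, and it is the load-bearing step of your plan. Only one direction holds: a vertex with a \emph{unique} closest player is always won by that player (\autoref{obs:uniqueshortestpath}); the converse fails because a tied player's wave can be blocked by vertices that were colored or removed earlier, so that the tied vertex ends up colored after all. Concretely, on $G_{5\times 5}$ with $p_1=(3,3)$, $p_2=(1,1)$, $p_3=(2,1)$, the vertex $(1,4)$ is at distance $3$ from both $p_1$ and $p_2$, yet it is colored~1 at time~3: its neighbor $(1,3)$ is removed at time~2 and $(2,4)$ is colored~1, so no color-2 wave ever reaches it (and then $(1,5)$ is colored~1 as well). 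The paper is explicit that this converse ``does not hold in general'' (it proves it only for the special hypercube profiles in \autoref{lem:standoff}). Since your argument upper-bounds the current payoff of the deviating player by her Voronoi cell, every improvement inequality you derive is unjustified; the paper instead avoids this entirely, either by arguments like \autoref{prop:monotonemove} (the new position keeps every previously won vertex and strictly gains one) or by directly determining the colored regions for each of the finitely many close configurations in \autoref{lem:close}.

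A second concrete gap is the collinear case. The $L_1$ bisector structure does let you project onto a path, but the reduction does not close the case: with three consecutive collinear players on a grid of width $5$ (the configuration $(x,2),(x,3),(x,4)$ in a $5$-column grid), the induced path instance is $P_5$ with positions $(2,3,4)$, which \emph{is} a Nash equilibrium by \autoref{lem:threeonpath}; the grid profile is nevertheless not one, and the only improving deviation moves \emph{off} the line (the paper's Case~1 of \autoref{lem:close} moves Player~2 perpendicularly to $(x-1,2)$ and counts a genuinely two-dimensional region of size $3(\lceil m/2\rceil-1)>m$). So ``the same improving moves translate'' misses exactly the boundary instances $n=5$ that the hypothesis $m,n\ge 5$ is meant to cover. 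To repair the proposal you would need (i) a per-case justification of the payoff upper bound that does not assume the Voronoi characterization, and (ii) a separate two-dimensional argument for nearly-collinear and tightly clustered profiles, which is essentially what the paper's split into $\max\{\Delta_x,\Delta_y\}\ge 3$ (with the strict-control dichotomy) versus $\max\{\Delta_x,\Delta_y\}\le 2$ (ten explicit cases) accomplishes.
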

Before proving the theorem,
let us first introduce some general definitions
and observations.
Throughout this section, we denote the strategy of Player~$i$,
that is, the initially chosen vertex of Player~$i$,
by~$p_i:=(x_i,y_i)\in[m]\times[n]$.
Note that any strategy profile where more than one player
chooses the same position is never a \nash since
in this case each of these players gets a payoff of zero,
and can improve its payoff by choosing any free vertex
(to obtain a payoff of at least one).
Therefore, we will assume without loss of generality
that $p_1 \neq p_2 \neq p_3$.
Further, note that the game is highly symmetric with respect to the
axes. Specifically, reflecting coordinates along a dimension,
or rotating the grid by~90 degrees, yields the same outcome for the game.
Thus,
in what follows, we only consider possible cases up to
the above symmetries.

We define $\Delta_x:=\max_{i,j\in[k]}|x_i-x_j|$ and
$\Delta_y:=\max_{i,j\in[k]}|y_i-y_j|$ to be the maximum coordinate-wise differences
among the positions of the players.
We say that a player \emph{strictly controls} the other two players,
if both reside on the same side of the player, in both dimensions.  
\begin{definition}
    Player~$i$ \emph{strictly controls} the other players,
    if for each other Player~$j$ with~$j\neq i$, either
    \begin{align*}
      &  (x_i < x_j) \; \land \; (y_i < y_j), \text{ or}\\ 
      &  (x_i < x_j) \; \land \; (y_i > y_j), \text{ or}\\
      &  (x_i > x_j) \; \land \; (y_i < y_j), \text{ or}\\
      &  (x_i > x_j) \; \land \; (y_i > y_j) \text{ holds}. 
    \end{align*}
 \end{definition}
We now prove \autoref{thm:grids}. 

\begin{proof}[Proof of~\autoref{thm:grids}]
  Let~$m \geq 5$ and $n\ge 5$. We perform a case distinction
  based on the relative positions of the three players.
  As a first case, we consider strategy profiles where the players
  are playing ``far'' from each other, that is, there are two players
  whose positions differ by at least four in some coordinate
  (formally $\max\{\Delta_x,\Delta_y\}\ge 3$).
  For these profiles, we distinguish two subcases, namely,
  whether there existss a player who strictly controls the others
  (\autoref{lem:strict}) or not (\autoref{lem:nonstrict}).
  We prove that none of these cases yields a \nash by
  showing that there always exists a player who can improve is payoff.
  Notably, the improving player always moves closer to the other two
  players.
  We are left with the case where the players are playing
  ``close'' to each other, in the sense that their positions all lie
  inside a $3\times 3$ subgrid (that is, $\max\{\Delta_x,\Delta_y\}\le
  2$). For these strategy profiles, we show that there always exists a player who can
  improve her payoff (\autoref{lem:close}), however the improving
  position depends not only on the relative positions between the
  players, but also on the global positioning of this subgrid on the overall grid.
  This leads to a somewhat erratic behaviour, which we overcome
  by considering all possible close positions (up to symmetries) in the
  proof of \autoref{lem:close}.
  Altogether, Lemmas~\ref{lem:strict}, \ref{lem:nonstrict}
  and~\ref{lem:close} cover all possible strategy profiles
  (ruling them out as \nashs),
  thus implying the theorem.\end{proof}

In order to conclude \autoref{thm:grids}, it remains to prove the
lemmas mentioned in the case distinction discussed above.
To this end, we start with two easy preliminary results.
First, we observe that a vertex for which there is a unique player with the shortest distance to it is colored in that player's color (note that this is true in general for every graph and any number of players).

\begin{observation}
  \label{obs:uniqueshortestpath}
  Let~$G=(V,E)$ be an undirected graph and let~$(p_1,\ldots,p_k)$ be a strategy profile.
  Let~$v\in V$ be a vertex for which there exists an~$i\in[k]$ such that~$\dist_G(p_i,v) = :\delta < \dist(p_j,v)$ holds for all~$j\in[k]$, $j\neq i$,
  then~$v$ will be colored in color~$i$ at time~$\delta$.
\end{observation}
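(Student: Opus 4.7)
The plan is to argue by induction on $\delta = \dist_G(p_i, v)$. For the base case $\delta = 0$ we have $v = p_i$, which is colored in color $i$ at time $0$ by definition of the game; the assumption forces all other players to lie strictly farther from $v$, so no collision on $v$ occurs at time~$0$.

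For the inductive step, fix $v$ with $\dist_G(p_i, v) = \delta \ge 1$ and $\dist_G(p_j, v) > \delta$ for every $j \neq i$. Pick a neighbor $u$ of $v$ lying on a shortest $p_i$-to-$v$ path, so that $\dist_G(p_i, u) = \delta - 1$; the triangle inequality gives $\dist_G(p_j, u) \ge \dist_G(p_j, v) - 1 > \delta - 1$ for every $j \neq i$, so the inductive hypothesis applies to $u$ and shows that $u$ is colored $i$ at time $\delta - 1$. Consequently, at the end of step $\delta - 1$ the vertex $v$ already has at least one $i$-colored neighbor, namely $u$.

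The core of the argument is to show that $v$ is still uncolored at the end of time $\delta - 1$ and has no neighbor colored with some $j \neq i$ at that moment; together with the previous paragraph, the propagation rule then colors $v$ in color $i$ at time $\delta$ (and in particular $v$ is not removed). For this I would rely on the easy auxiliary fact that any vertex $w$ acquiring color $j$ at time $t$ can be reached from $p_j$ via a chain of vertices colored $j$ at times $0, 1, \ldots, t$, which yields $\dist_G(p_j, w) \le t$. Hence if some neighbor $w$ of $v$ had color $j \neq i$ at time $t \le \delta - 1$, we would obtain $\dist_G(p_j, v) \le t + 1 \le \delta$, contradicting the hypothesis; the same inequality rules out $v$ itself having been colored (by $i$ or by anyone else) at any time strictly smaller than $\delta$.

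The only conceptual subtlety is that the diffusion process can block colors against one another, so coloring times need not coincide in general with graph distances. However, only the one-sided bound ``$w$ colored at time $t$ implies $\dist_G(p_j, w) \le t$'' is needed above, and this follows by a straightforward induction on $t$ directly from the propagation rule, independently of the blocking mechanism. No other obstacle seems to arise.
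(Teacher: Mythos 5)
Your proof is correct and follows essentially the same route as the paper's: induction on $\delta$, producing a neighbor $u$ on a shortest $p_i$--$v$ path that is colored $i$ at time $\delta-1$, and ruling out any other color at or adjacent to $v$ via the distance bound. The paper leaves the auxiliary fact ``colored at time $t$ implies distance at most $t$'' implicit, whereas you state and justify it explicitly, which is a reasonable refinement rather than a different argument.
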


\begin{proof}
  The proof is by induction on the distance~$\delta$.
  For~$\delta=0$, it clearly follows from the definition of the propagation process
  that~$v=p_i$ has color~$i$ at time~0.
  For all~$\delta > 0$, it follows from the induction hypothesis that~$v$ has a neighbor~$u$ with~$\dist(p_i,u)=\delta-1$ that is colored in color~$i$ at time~$\delta-1$.
  Moreover, for all neighbors~$w$ of~$v$ it holds~$\dist(p_j,w)>\delta-1$ for all~$j\neq i$.
  This implies that no neighbor of~$v$ has a different color than~$i$ at time~$\delta-1$, and thus, $v$ has color~$i$ at time~$\delta$.
\end{proof}

Based on \autoref{obs:uniqueshortestpath}, we show that whenever a player has
distance at least three to the other players and both
of them are positioned on the same side
of that player
(with respect to both dimensions),
then she can improve her payoff by moving closer to the others
(see \autoref{fig:monotonemove} for an illustration).

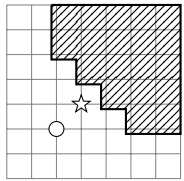
\begin{figure}[t]
    \centering
    \begin{tikzpicture}[scale=.33]
      \draw[help lines] (0,0) grid (7,7);
      \node[p1] at (2, 2) {};
      \draw[thick, pattern = north east lines, pattern color = black] (1.8,7)--(1.8,4.8)--(2.8,4.8)--(2.8,3.8)--(3.8,3.8)--(3.8,2.8)--(4.8,2.8)--(4.8,1.8)--(7,1.8)--(7,7)--(1.8,7);
      \node[star1] at (3,3) {};
    \end{tikzpicture}
    \caption{Example of a strategy profile where Player~1 (white circle) has both other
      players to her top right with distance at least three (the
      shaded region denotes the possible positions for Player~2 and~3). Player~1
      can increase her payoff by moving closer to the others (star).}
    \label{fig:monotonemove}
\end{figure}

\begin{proposition}\label{prop:monotonemove}
  If ~$x_1 \le x_j$, $y_1 \le y_j$, and 
  $\norm{p_1-p_j} \ge 3$ holds for~$j\in\{2,3\}$,
  then Player~1 can increase her payoff by moving to~$(x_1 + 1, y_1 +1)$.
\end{proposition}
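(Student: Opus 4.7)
My plan is to use \autoref{obs:uniqueshortestpath} to lower-bound Player~1's payoff at any position~$q$ by the size of her strict Voronoi cell
\[
V_1(q) := \{v \in V : \dist(q, v) < \dist(p_j, v) \text{ for both } j \in \{2, 3\}\}.
\]
\autoref{obs:uniqueshortestpath} gives $U_1(q) \ge |V_1(q)|$, while applying it to Players~2 and~3 (together with the tie-removal rule for a vertex whose colored neighbors carry two different colors) shows that no vertex strictly closer to an opponent, or tied among closer opponents, is colored by Player~1. Writing $p_1' := (x_1 + 1, y_1 + 1)$, the core of the argument is to show that the strict Voronoi cell strictly grows: $|V_1(p_1)| < |V_1(p_1')|$.

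The inclusion $V_1(p_1) \subseteq V_1(p_1')$ follows from the identity $\dist(p_1', v) - \dist(p_1, v) \in \{+2, 0, -2\}$, depending on whether $v$ lies in the closed lower-left quadrant of~$p_1$, in a mixed quadrant, or in the strict upper-right quadrant of~$p_1'$. Only the lower-left case needs the hypothesis: there $\dist(p_j, v) = \norm{p_j - p_1} + \dist(p_1, v) \ge 3 + \dist(p_1, v) > \dist(p_1', v)$, so $v \in V_1(p_1')$. For strictness, I track the ``advantage'' $A(v) := \min_j \dist(p_j, v) - \dist(p_1, v)$: along any shortest path from $p_1$ to a nearest opponent $p_{j^*}$ (at distance $d := \min_j \norm{p_j - p_1} \ge 3$), $A$ drops by~$2$ at each step from~$d$ down to~$-d$, so some path vertex~$u$ satisfies $A(u) \in \{-1, 0\}$, hence $u \notin V_1(p_1)$. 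Routing the path to leave $p_1$ diagonally through $p_1'$ (possible when $x_{j^*} > x_1$ and $y_{j^*} > y_1$) places $u$ in the strict upper-right quadrant of $p_1'$, so $\dist(p_1', u) = \dist(p_1, u) - 2$ and $u \in V_1(p_1') \setminus V_1(p_1)$.

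The main obstacle is twofold. First, in the corner case where $p_{j^*}$ shares a row or column with~$p_1$, no shortest $p_1$--$p_{j^*}$ path passes through~$p_1'$; there I would shift the witness one unit off the direct path, taking $u := (x_1+1, y_1+\lceil d/2 \rceil)$ (or its row analogue), and verify the required inequalities via the triangle bound $\dist(p_{j^{**}}, u) \ge \norm{p_{j^{**}} - p_1} - \dist(p_1, u)$ applied to the other opponent $p_{j^{**}}$. Second, the true payoff can exceed the strict-Voronoi lower bound when tied vertices are captured via propagation shortcuts; to lift the Voronoi gain to a strict payoff gain, one must argue (by a short case analysis of the propagation on the grid) that the Voronoi increase is not cancelled by any redistribution of tied vertices, which is the main technical content beyond the two distance computations above.
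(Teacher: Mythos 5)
Your plan reduces the proposition to a comparison of strict Voronoi cells, and the two distance computations you sketch for that comparison are fine; the problem is that \autoref{obs:uniqueshortestpath} only gives the \emph{lower} bound $U_1(q)\ge|V_1(q)|$, and nothing you propose upper-bounds the \emph{old} payoff $U_1(p_1)$ by $|V_1(p_1)|$. That upper bound is genuinely false in general: a tied vertex can be captured when an opponent's wavefront is blocked by removed vertices. For instance, with Player~1 at $(1,2)$ and an opponent at $(2,1)$ (the third player far away), the vertex $(2,3)$ is at distance~$2$ from both, but the opponent's only shortest path to it passes through $(2,2)$, which is removed at time~$1$; so $(2,3)$ is colored by Player~1 at time~$2$ despite the tie. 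Hence $U_1(p_1')\ge|V_1(p_1')|>|V_1(p_1)|$ does not yield $U_1(p_1')>U_1(p_1)$. You flag exactly this issue at the end and defer it to ``a short case analysis of the propagation,'' but that deferred step is the actual content of the proposition, and as written the proposal does not prove it.

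The paper sidesteps counting entirely by proving set-wise monotonicity of Player~1's colored region under the move to $p_1'=(x_1+1,y_1+1)$: (i) every vertex in $[x_1]\times[y_1]$ remains at uniquely minimal distance from $p_1'$ (this is where $\norm{p_1-p_j}\ge 3$ is used), so it keeps color~1 by \autoref{obs:uniqueshortestpath}; (ii) every vertex outside that quadrant has a shortest path from $p_1$ through one of the gateways $(x_1+1,y_1)$, $(x_1,y_1+1)$, $p_1'$, and from $p_1'$ there is a path of at most the same length through the same gateways, so whatever was colored~1 there stays colored~1; (iii) at least one vertex in the open upper-right quadrant whose distance from $p_1$ exceeded the opponents' minimum distance by at most one becomes uniquely closest to $p_1'$ and is newly won. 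To repair your argument, replace the Voronoi comparison with a preservation argument of type (i)--(ii) for the actual colored set; the strictness witness you construct in your second paragraph can then play the role of step (iii).
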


\begin{proof}
Let~$p_1':=(x_1+1,y_1+1)$ and~$x\in[x_1]\times[y_1]$.
Note that~$\norm{p_1'-x}=\norm{p_1-x}+2 < \norm{p_j-x} =
\norm{p_1-p_j}+\norm{p_1-x}\ge \norm{p_1-x}+3$ holds for $j\in\{2,3\}$.
Hence, from position~$p_1'$, Player~1 still has the unique shortest distance to~$x$.
By \autoref{obs:uniqueshortestpath},~$x$ gets color~1.
Moreover, for any other position
$x \not\in[x_1]\times[y_1]$, there is a shortest path from~$p_1$
to~$x$ going through at least one of the positions~$(x_1+1,y_1)$,
$(x_1,y_1+1)$, or~$p_1'$.
Clearly, there is also a shortest path from~$p_1'$ to~$x$ of at most the same
length going through one of these positions.
Thus, if~$x$ was colored with color~1 before, then~$x$ is still
colored in color~1.

To see that Player~1 strictly increases her payoff from~$p_1'$, note that
$\norm{p_1'-x}=\norm{p_1-x}-2$ holds for all~$x\in[x_1+1,m]\times [y_1+1,n]$.
Hence, Player~1 now has the unique shortest distance to all those
positions where the distance from~$p_1$ was at most one larger than
the shortest distance from any other player.
We claim that there exists a position~$v\in[x_1+1,m]\times[y_1+1,n]$ with $\norm{p_1-v} - \min_{j\in\{2,3\}}\norm{p_j-v} \in \{0,1\}$
which is not colored in color~1 before.
By \autoref{obs:uniqueshortestpath}, Player~1 then gets~$v$ when moving to~$p_1'$.
To verify the claim, assume first without loss of generality that~$\norm{p_2-p_1}\le\norm{p_3-p_1}$.
Then, on any shortest path from~$p_2$ to~$p_1$, there exists a ``middle'' vertex~$v=(v_x,v_y)\in[x_1,x_2]\times[y_1,y_2]$ with~$\norm{p_1-v}-\norm{p_2-v}\in\{0,1\}$ and~$\norm{p_2-v}\le\norm{p_3-v}$.
Assume towards a contradiction that~$v$ is colored in color~1. Then this implies that on every shortest path from~$p_2$ to~$v$ a vertex has been removed during the propagation process since otherwise either~$v$ would have been removed or colored in color~2.
But this can only happen due to the position of Player~3 since Player~1 has a larger distance than Player~2 to all vertices in the subgrid~$[v_x,x_2]\times[v_y,y_2]$.
Hence, on every shortest path from~$p_2$ to~$v$ there exists a vertex with the same distance to~$p_3$, which implies that~$p_3$ and $p_2$ have the same distance to~$v$.
It follows, that also on every shortest path from~$p_3$ to~$v$ a vertex has to be removed during propagation because of Player~2.
But this is not possible due to the structure of a grid (\autoref{fig:monotonemove_correctness} depicts a typical situation).
Since~$p_2$ and $p_3$ have the same distance to~$v$, it follows from~$p_2\neq p_3$ that~$x_2\neq x_3$ and~$y_2\neq y_3$.
Assume without loss of generality that~$x_2 < x_3$ (and thus~$y_2 > y_3$) and consider the shortest path $v,(v_x,v_y+1),\ldots,(v_x,y_2),(v_x+1,y_2),\ldots,p_2$ from~$v$ to~$p_2$.
At least one of the inner vertices has to be removed during the propagation.
The vertices $(v_x,y_2),\ldots,(x_2-1,y_2)$ cannot be removed because their distance to~$p_3$ is strictly larger than the distance to~$p_2$.
Hence, a vertex $u=(u_x,u_y)\in\{(v_x,v_y+1),\ldots,(v_x,y_2-1)\}$ has to be removed.
Assume that~$u$ has the maximum $y$-coordinate among the removed vertices.
It follows that there is a shortest path from~$p_3$ to~$u$ going through~$(u_x,u_y-1)$ or through~$(u_x+1,u_y)$ on which all inner vertices are colored in color 3.
But this implies that there is also a shortest path from~$p_3$ to~$v$ going through~$(u_x,u_y-1)$ or~$(u_x+1,u_y)$ that is colored with color~3. Hence,~$v$ cannot have color~1.
\end{proof}

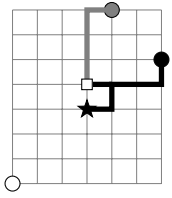
\begin{figure}[t]
    \centering
    \begin{tikzpicture}[scale=.33]
      \draw[help lines] (0,0) grid (6,7);
      \node[p1] at (0, 0) {};
      \node[p2] at (4,7) {};
      \node[p3] at (6,5) {};
      \draw[line width=2pt, gray] (4,7) -- (3,7) -- (3,4);
      \draw[line width=2pt, black] (6,5) -- (6,4) -- (4,4) -- (4,3) -- (3,3);
      \draw[line width=2pt, black] (3,4) -- (4,4);
      \node[draw, rectangle, inner sep=2pt, fill=white] at (3,4) {};
      \node[star3] at (3,3) {};
    \end{tikzpicture}
    \caption{A profile where Player~1 (white) has distance 6 to the vertex~$v$ (black star) and Player~2 (gray) and Player~3 (black) have distance 5 to~$v$. The vertex~$u$ (white square) on the shortest path from Player~2 to~$v$ indicated by thick lines is removed during the propagation due to the shortest path from Player~3 to~$u$. Player~3, however, still reaches~$v$ by a path of length~$5$. Thus, $v$ is not colored by Player~1.}
    \label{fig:monotonemove_correctness}
\end{figure}

We go on to prove the lemmas,
starting with the case that the players play far from each other.
The following lemma handles the first subcase,
that is,
where one of the players strictly controls the others.

\begin{lemma}\label{lem:strict}
    A strategy profile with~$\max\{\Delta_x,\Delta_y\}\ge 3$ where one
    of the players strictly controls the others is not a \nash.
\end{lemma}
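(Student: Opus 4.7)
The plan is to exhibit, for each configuration satisfying the hypothesis, a player with a strictly improving move. Using the grid's symmetries (reflection in each axis and the diagonal swap exchanging the two coordinates), together with relabelling the players, I may assume without loss of generality that Player~1 is the controlling player and that both opponents lie in her top-right quadrant: $x_1<x_2,x_3$ and $y_1<y_2,y_3$. I then split on how close the opponents are to Player~1.

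If $\norm{p_1-p_2}\ge 3$ and $\norm{p_1-p_3}\ge 3$, then \autoref{prop:monotonemove} applies directly: Player~1 strictly improves by moving diagonally to $(x_1+1,y_1+1)$. The remaining case is that some opponent, say Player~3, satisfies $\norm{p_1-p_3}\le 2$; strict top-right placement then forces $p_3=(x_1+1,y_1+1)$. The hypothesis $\max\{\Delta_x,\Delta_y\}\ge 3$, together with the identities $\Delta_x=x_2-x_1$ and $\Delta_y=y_2-y_1$ (which hold because $p_3$ contributes only one unit to each coordinate-wise spread), gives $\max(x_2-x_1,y_2-y_1)\ge 3$; by the diagonal symmetry between $x$ and $y$, I may further assume $x_2-x_1\ge y_2-y_1$, so in particular $x_2-x_1\ge 3$.

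Since $(x_1+1,y_1+1)=p_3$ is blocked, I propose that Player~1 move one step up, to $(x_1,y_1+1)$; this vertex is unoccupied and lies in the grid because $y_1+1\le y_3\le n$. For every cell $(x,y)$ with $x\le x_1$, Player~1 is strictly closer than Player~3 by exactly one step, and the triangle inequality combined with $y_2-y_1-1<x_2-x_1$ shows that Player~1 is strictly closer than Player~2 as well. By \autoref{obs:uniqueshortestpath}, Player~1 then colors every cell in the column $\{(x,y):x\le x_1\}$, yielding a payoff of at least $x_1\cdot n$, strictly greater than her original payoff of $x_1\cdot y_1$ (exactly the size of her old unique-closest region) since $y_1<y_3\le n$. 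When instead $y_2-y_1>x_2-x_1$, the entirely symmetric argument moves Player~1 to $(x_1+1,y_1)$ and claims the row $\{(x,y):y\le y_1\}$.

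I expect the main obstacle to be precisely this second case: Player~1's natural diagonal move target is blocked by the diagonally adjacent Player~3, so a bespoke argument around an axis-aligned one-step move is required. The essential subtlety is verifying that the asymmetry $x_2-x_1\ge y_2-y_1$ really does shield the newly captured column from Player~2's encroachment, which I handle via a triangle-inequality estimate analogous to the one used in \autoref{prop:monotonemove}.
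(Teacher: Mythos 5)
Your first case (no opponent at $(x_1+1,y_1+1)$, hence both at $\norm{\cdot}$-distance at least~$3$, apply \autoref{prop:monotonemove}) is exactly the paper's Case~1. The gap is in your second case, and it is fatal: you assert that Player~1's original payoff is exactly $x_1y_1$, ``the size of her old unique-closest region.'' In this game a player's payoff is \emph{not} limited to the vertices to which she is uniquely closest. A tied vertex is still colored by whichever color actually reaches it first through uncolored vertices, and when Player~3 sits diagonally adjacent at $(x_1+1,y_1+1)$, the two large tied quadrants $[x_1]\times[y_1+1,n]$ and $[x_1+1,m]\times[y_1]$ are split between Players~1 and~3 along a staircase (the cell $(x_1,y_1+1)$ is removed at time~1, which blocks color~3 from the column below it, and so on). Player~1's share of these quadrants can push her original payoff above $x_1n$. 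Concretely, take $m=n=55$, $p_1=(50,50)$, $p_3=(51,51)$, $p_2=(53,51)$: this satisfies your hypotheses with $x_2-x_1=3\ge y_2-y_1$, so you move Player~1 to $(50,51)$. Her new payoff is exactly $50\cdot 55=2750$ (Player~3 is strictly closer to every vertex in columns $\ge 51$). But her original payoff is $2500$ (the block $[50]\times[50]$) plus $235$ (the cells $(x,y)$ with $y\in[51,55]$ and $x+y<101$, which she wins along the staircase against Player~3) plus $49$ (column~$51$, rows $1$--$49$, which she wins because $(51,50)$ is removed and blocks Player~3), totalling $2784>2750$. Your proposed move strictly \emph{decreases} her payoff, so the argument does not close this case.

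The paper avoids this trap by never asking Player~1 to move in the diagonally-adjacent subcase. Instead it observes that either the far player strictly controls the other two (so the reflected \autoref{prop:monotonemove} applies to \emph{her}, moving diagonally inward), or the far player shares a coordinate with the adjacent player, in which case the far player improves by moving to distance one from the adjacent player along that shared line, capturing an entire half-grid. Both of these improving moves can be certified by clean unique-closest-distance arguments, with no need to evaluate the erratic staircase payoff of the controlling player. If you want to salvage your approach, you would need an exact account of how tied vertices are distributed in the original profile, which is essentially the case analysis the paper defers to \autoref{lem:close}.
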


\begin{figure}[t]
    \centering
    \begin{subfigure}{.2\linewidth}
      \centering
      \begin{tikzpicture}[scale=.33]
        \draw[help lines] (0,0) grid (5,5);
        \node[p1] at (0, 0) {};
        \draw[thick, pattern = north east lines, pattern color = black] (0.8,5)--(0.8,2.8)--(1.8,2.8)--(1.8,0.8)--(5,0.8)--(5,5)--(0.8,5);
        \node[star1] at (1,1) {};
      \end{tikzpicture}
      \caption*{1}
      \label{fig:case1yes}
    \end{subfigure}
    \begin{subfigure}{.2\linewidth}
      \centering
      \begin{tikzpicture}[scale=.33]
        \draw[help lines] (0, 0) grid (5,5);
        \node[p1] at (0, 0) {};
        \node[p2] at (1, 1) {};
        \draw[fill=black] (1.8,5)--(1.8,2.8)--(2.8,2.8)--(2.8,1.8)--(5,1.8)--(5,5)--(1.8,5);
        \node[star3] at (2,2) {};
      \end{tikzpicture}
      \caption*{2(a)}
      \label{fig:case2ayes}
    \end{subfigure}
    \begin{subfigure}{.2\linewidth}
      \centering
      \begin{tikzpicture}[scale=.33]
        \draw[help lines] (0,0) grid (5,5);
        \node[p1] at (0, 0) {};
        \node[p2] at (1, 1) {};
        \draw[line width = 4pt] (1,2.8)--(1,5);
        \node[star3] at (1,2) {};
      \end{tikzpicture}
      \caption*{2(b)}
      \label{fig:case2byes}
    \end{subfigure}
    \caption{Possible cases (up to symmetry) for Player~1 (white) strictly
      controlling Player~2 (gray) and Player~3 (black).
      Circles denote the player's strategies. The shaded region
      contains the possible positions of both Player~2 and~3, whereas the
      black regions denote possible positions for Player~3 only.
      A star marks the position improving the payoff of the respective player.}
    \label{fig:strict}
  \end{figure}
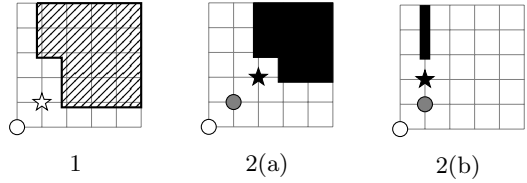

\begin{proof} 
  We assume without loss of generality that
  Player~1 strictly controls Player~2 and Player~3,
  specifically,
  we assume that $x_1 < x_2$ and $y_1 < y_2$ and $x_1 < x_3$ and $y_1 < y_3$ holds.
  \autoref{fig:strict} depicts the three possible cases
  for the positions of Player~2 and Player~3.
  For each case, we show that a player which can improve her payoff exists.
  \begin{enumerate}[\bf {Case} 1{:}]
    \item
      We assume that $(x_2, y_2) \neq (x_1 + 1, y_1 + 1)$ and $(x_3, y_3) \neq (x_1 + 1, y_1 + 1)$.
      By~\autoref{prop:monotonemove}, Player~1 gets a higher payoff
      from $(x_1 + 1, y_1 + 1)$.
    \item
      We assume without loss of generality that $(x_2, y_2) = (x_1 +
      1,y_1 + 1)$.
      \begin{enumerate}[\bf {(}a{)}]
        \item
          We assume~$x_2 < x_3$ and~$y_2 < y_3$.
          Then, $x_3>x_2+1$ or~$y_3>y_2+1$ holds
          since $\max\{\Delta_x,\Delta_y\}\ge 3$.
          Note that Player~3 strictly controls Player~1 and Player~2
          and that this case is symmetric to Case~1.
        \item\label{case:2b}
          We assume $x_2 \geq x_3$ or $y_2 \geq y_3$.
          Then, it holds that $x_3=x_2$ or~$y_3=y_2$.
          We assume~$x_3=x_2$ (the argument for $y_3=y_2$ being analogous).
          Since $\max\{\Delta_x,\Delta_y\}\ge 3$, we have $y_3 > y_2+1$,
          thus Player~3 can improve by moving to $(x_2, y_2 + 1)$
          because then all positions in $[m]\times[y_2+1,n]$
          are colored in color~3, and before only a strict
          subset of these positions were colored in her color.
      \end{enumerate}
  \end{enumerate}
\end{proof}
The other subcase,
where no player strictly controls the others,
is handled by the following lemma.

\begin{lemma}\label{lem:nonstrict}
    A strategy profile with $\max\{\Delta_x,\Delta_y\}\ge 3$ where
    no player strictly controls the others is not a \nash.
  \end{lemma}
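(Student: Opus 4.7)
The plan is to use the structural constraint ``no player strictly controls'' to reduce to a short list of canonical configurations. A player strictly controls the others precisely when it shares neither an $x$-coordinate nor a $y$-coordinate with any other player, so the hypothesis is equivalent to saying that in the graph on the three players in which $i \sim j$ whenever $x_i = x_j$ or $y_i = y_j$, every vertex has degree at least $1$. On three vertices this forces the graph to be either a $2$-edge path or a triangle, and a short case check (using that distinct positions cannot share both coordinates) shows, up to the rotation/reflection symmetries of the game and relabeling of players, that only three shape-classes can arise: (I) all three players share a column, (II) all three share a row, and (III) an ``L-shape'' in which some player shares its column with a second and its row with the third.

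Case~(III) will be dispatched by \autoref{prop:monotonemove}. After relabeling and applying a reflection we may assume $p_1 = (a, c)$, $p_2 = (a, b)$, $p_3 = (d, b)$ with $c > b$ and $d > a$. Player~1 has both other players weakly in its lower-right quadrant at $\ell^1$-distances $\Delta_y = c - b$ and $\Delta_x + \Delta_y$, so if $\Delta_y \geq 3$ the appropriate reflection of \autoref{prop:monotonemove} lets Player~1 improve by moving to $(a+1, c-1)$. Symmetrically, Player~3 has both other players weakly in its upper-left quadrant at distances $\Delta_x$ and $\Delta_x + \Delta_y$, so if $\Delta_x \geq 3$ then Player~3 improves by moving to $(d-1, b+1)$. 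Since $\max\{\Delta_x, \Delta_y\} \geq 3$, at least one of these deviations is available.

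For Case~(I), assume WLOG $x_1 = x_2 = x_3 = x$ and $y_1 < y_2 < y_3$ with $y_3 - y_1 \geq 3$. If $y_2 - y_1 \geq 3$, then both of Player~1's distances are $\geq 3$ and \autoref{prop:monotonemove} applies (possibly after a horizontal reflection, which is admissible since $m \geq 5$); if $y_3 - y_2 \geq 3$, Player~3 improves symmetrically. The remaining subcases are exactly the ``tight'' gap-patterns $(y_2 - y_1, y_3 - y_2) \in \{(1,2), (2,1), (2,2)\}$. Since all three players share column $x$, for any cell $(x', y)$ the term $|x' - x|$ contributes identically to each of the three distances, so the closeness rank at $(x', y)$ depends only on $y$; by \autoref{obs:uniqueshortestpath} every row with a unique closest player is coloured entirely in that colour, and a short propagation argument along the row shows that each tied row is fully removed. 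In pattern $(1,2)$, Player~3 deviating from $(x, y_3)$ to $(x, y_3 - 1)$ reshapes the column gaps into $(1,1)$, and the previously-killed row $y_2 + 1$ becomes uniquely closest to Player~3, a gain of exactly $m \geq 5$ cells. Pattern $(2,1)$ is symmetric via Player~1 moving to $(x, y_1 + 1)$, and in pattern $(2,2)$ Player~1 moving to $(x, y_1 + 1)$ absorbs the previously killed row $y_1 + 1$, again gaining $m$ cells. Case~(II) is Case~(I) rotated by $90^\circ$.

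The main technical obstacle is the tight-spacing subcases of Case~(I). The natural intuition that the ``squeezed'' middle Player~2 should improve by breaking out horizontally fails: because all three players are collinear, each row is either uniformly tied or uniformly uniquely-closest, so shifting Player~2 off-column does not reclaim the tied rows. The correct improving move always belongs to an extreme player (Player~1 or Player~3) stepping one position inward, collapsing a gap of $2$ into a gap of $1$ and thereby absorbing the wasted tied row between them.
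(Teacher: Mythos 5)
Your proof is correct. At the top level it follows the same decomposition as the paper: ``no player strictly controls'' is exactly the statement that every player shares a coordinate with some other player, and this forces either a collinear configuration (your Cases I/II, the paper's Case~1) or an L-shape (your Case~III, the paper's Case~2); both proofs then rely on the same two tools, \autoref{obs:uniqueshortestpath} and \autoref{prop:monotonemove}. Within the cases, however, the arguments genuinely differ. For the L-shape, the paper sub-cases on the gap $y_2-y_1$ between the two column-sharing players, exhibiting ad hoc improving moves to $(x_1+2,y_1)$ resp.\ $(x_1+2,y_1+1)$ for gaps of one and two and invoking \autoref{prop:monotonemove} only for the large-gap sub-case; you instead observe that each arm-end player sees both opponents in a single weak quadrant at distances $\Delta_y$ and $\Delta_x+\Delta_y$ (resp.\ $\Delta_x$ and $\Delta_x+\Delta_y$), so whichever of $\Delta_x,\Delta_y$ is at least three hands the entire case to the proposition in one stroke --- cleaner, and with no sub-cases. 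Conversely, for the collinear case the paper has a uniform one-move argument (some consecutive gap is at least two, and the extreme player beyond it moves adjacent to the middle player, thereby capturing the whole half-grid $[m]\times[y_2+1,n]$, which strictly contains her previous territory); this single move covers your three tight patterns and the large-gap patterns simultaneously, so your explicit row-by-row payoff accounting for the patterns $(1,2)$, $(2,1)$, $(2,2)$, while correct (the tied rows are indeed entirely removed, being sandwiched between two uniquely-owned rows of different colors), is more laborious than necessary. Your closing observation that the improving deviation must come from an extreme player rather than the squeezed middle one agrees with the paper's choice of deviator in every sub-case.
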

  
\begin{proof}
  If no player strictly controls the others, then it follows
  that at least two players have the same coordinate in at least one
  dimension. We perform a case distinction on the cases as depicted in
  \autoref{fig:nonstrict}.

  \begin{enumerate}[\bf {Case} 1{:}]
    \item
      All three players have the same coordinate in one dimension.
      We assume that~$x_1 = x_2 =x_3$ (the case~$y_1=y_2=y_3$ is
      analogous). Without loss of generality also~$y_1 < y_2 < y_3$ holds.
      Since $\max\{\Delta_x,\Delta_y\}\ge 3$, it follows
      that~$y_{i+1}-y_i\ge 2$ holds for some~$i\in\{1,2\}$, say
      for~$i=2$.
      Clearly, Player~3 can improve her payoff by
      choosing~$(x_3,y_2+1)$ (analogous to Case~\ref{case:2b}
      in the proof of \autoref{lem:strict}).
    \item
      There is a dimension where two players have the same
      coordinate but not all three players have the same coordinate in
      any dimension. We assume~$x_1=x_2 < x_3$ and~$y_1 < y_2$ (all other cases are
      analogous). We also assume that~$y_1 \le y_3 \le y_2$,
      since otherwise Player~3 strictly controls the others,
      and this case is handled by~\autoref{lem:strict}.
      \begin{enumerate}[\bf {(}a{)}]
        \item
          We assume that $y_2 = y_1 + 1$.
          Then~$x_3 \ge x_1 + 3$ holds since $\max\{\Delta_x,\Delta_y\}\ge 3$.
          Player~3 increases her payoff by moving to $(x_1 + 2, y_1)$
          (analogous to Case~\ref{case:2b} in the proof of \autoref{lem:strict}).
        \item
          We assume that $y_2 = y_1 + 2$.
          Then~$x_3 \ge x_1 + 3$ holds since $\max\{\Delta_x,\Delta_y\}\ge 3$.
          Player~3 increases her payoff by moving to $(x_1 + 2, y_1+1)$
          (analogous to Case~\ref{case:2b} in the proof of \autoref{lem:strict}).
        \item
          We assume that $y_2 > y_1 + 2$ and $|y_2 - y_3| \leq |y_1-y_3|$.
          That is, without loss of generality,
          Player~3 is closer to Player~2.
          Then, by~\autoref{prop:monotonemove}, Player~1 increases her
          payoff by moving to $(x_1 + 1, y_1 + 1)$.
      \end{enumerate}
  \end{enumerate}
\end{proof}

\begin{figure}[t]
  \centering
  \begin{subfigure}{.2\linewidth}
    \centering
    \begin{tikzpicture}[scale=.33]
      \draw[help lines] (0,0) grid (5,5);
      \node[p1] at (0, 0) {};
      \node[p2] at (0, 1) {};
      \draw[fill=black] (0,2.8) rectangle (0.3,5);
      \node[star3] at (0, 2) {};
    \end{tikzpicture}
    \caption*{1}\label{fig:case1no}
  \end{subfigure}
  \begin{subfigure}{.2\linewidth}
    \centering
    \begin{tikzpicture}[scale=.33]
      \draw[help lines] (0,0) grid (5,5);
      \node[p1] at (0, 0) {};
      \node[p2] at (0, 1) {};
      \draw[fill=black] (2.8,0) rectangle (5,1.2);
      \node[star3] at (2, 0) {};
    \end{tikzpicture}
    \caption*{2(a)}\label{fig:case2ano}
  \end{subfigure}
  \begin{subfigure}{.2\linewidth}
    \centering
    \begin{tikzpicture}[scale=.33]
      \draw[help lines] (0,0) grid (5,5);
      \node[p1] at (0, 0) {};
      \node[p2] at (0, 2) {};
      \draw[fill=black] (2.8,0) rectangle (5,2.2);
      \node[star3] at (2, 1) {};
    \end{tikzpicture}
    \caption*{2(b)}\label{fig:case2bno}
  \end{subfigure}
  \begin{subfigure}{.2\linewidth}
    \centering
    \begin{tikzpicture}[scale=.33]
      \draw[help lines] (0,0) grid (5,5);
      \node[p1] at (0, 0) {};
      \node[p2] at (0, 3) {};
      \draw[fill=black] (0.8,1.8) rectangle (5,3.2);
      \node[star1] at (1,1) {};
    \end{tikzpicture}
    \caption*{2(c)}\label{fig:case2cno}
  \end{subfigure} 
  \caption{Possible cases (up to symmetry) when no player strictly
    controls the others. Circles denote the positions of Player~1
    (white) and Player~2 (gray).
    The black regions contain the possible positions for Player~3.
    A star marks the position improving the payoff of the respective
    player.}
  \label{fig:nonstrict}
\end{figure}
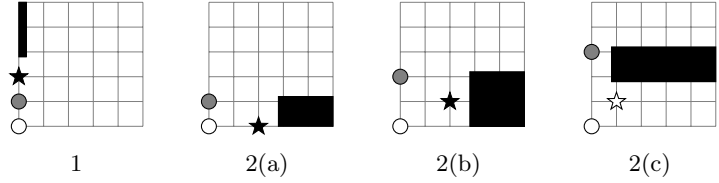
It remains to consider the cases where the players play close to each other.

\begin{lemma}\label{lem:close}
  A strategy profile with $\max\{\Delta_x, \Delta_y\}\le 2$ is not a \nash.
\end{lemma}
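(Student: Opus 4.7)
The plan is exhaustive case analysis based on the precise relative positions of the three players and the positioning of their bounding box within the grid. Since $\max\{\Delta_x,\Delta_y\}\le 2$, the three strategies $p_1,p_2,p_3$ all lie in a common $3\times 3$ subgrid $B\subseteq G_{m\times n}$. Write $B=[a,a+2]\times[b,b+2]$ with $1\le a\le m-2$ and $1\le b\le n-2$. Up to the dihedral symmetries of the square (which, combined with the global reflection and rotation symmetries of $G_{m\times n}$ noted before \autoref{lem:strict}, preserve the game), there are only a bounded number of essentially different ways of placing three distinct vertices in $B$, and I would enumerate them.

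For each such internal configuration, I would exhibit a player who has a strictly improving deviation. The improving moves come in two flavors. The first is a small \emph{local} perturbation in the vicinity of $B$, similar in spirit to the deviations used in the proofs of \autoref{lem:strict} and \autoref{lem:nonstrict}: by shifting one player slightly, the propagation boundary is rebalanced in that player's favor. The second is a \emph{long jump} to a far region of the grid; a player who is ``trapped'' among the others in $B$ and receives only a small payoff can often relocate several steps away from $B$, where, by \autoref{obs:uniqueshortestpath}, she claims a large contiguous region uncontested.

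Which type of deviation applies in general depends on how much ``slack'' the grid provides around $B$, i.e., on the values of $a-1$, $m-a-2$, $b-1$, $n-b-2$. The hypothesis $m,n\ge 5$ guarantees that the total slack in each dimension is at least~$2$, so that in each of the two dimensions at least one of the two corresponding slacks is at least~$1$. I would therefore split each internal configuration into subcases according to which sides of $B$ admit enough free space to accommodate a long jump. In each subcase, I would choose a specific deviating player and target vertex, and verify the strict payoff improvement by comparing the colored regions before and after the move using \autoref{obs:uniqueshortestpath}.

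The main obstacle is precisely the bookkeeping of this case analysis: because the payoff of each player in a close configuration is sensitive to the position of $B$ relative to the boundary, a single uniform argument seems unlikely, and one must verify strict improvement in every combination of internal configuration and global positioning. The cases are finite and each individual payoff comparison is routine, but organizing them so that no case is missed and the symmetries are correctly applied is the principal difficulty.
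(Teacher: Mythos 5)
Your plan coincides with the paper's strategy: enumerate, up to the grid's symmetries, the finitely many ways three distinct players can sit inside a $3\times 3$ box, and for each one exhibit a strictly improving deviation, which is sometimes a short local shift and sometimes a jump away from the cluster, with subcases governed by where the box sits relative to the boundary. The paper does exactly this, reducing to ten internal configurations and working through each.

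However, what you have written is an outline, not a proof: the entire mathematical content of this lemma \emph{is} the case analysis you defer. You never enumerate the configurations (how many are there up to symmetry? which ones require boundary-dependent subcases?), never name a single deviating player or target vertex, and never carry out a single payoff comparison. For this particular statement the ``idea'' is essentially free --- everyone would attempt an exhaustive case analysis --- and the difficulty you correctly identify (the bookkeeping, and the fact that payoffs in close configurations depend on the global position of the box) is precisely the part you do not do. Some of these cases are genuinely delicate: for instance, when the three players occupy three consecutive vertices of a column, the middle player's best deviation changes depending on whether the column segment is at distance $1$, $2$, or more from the boundary, and the tight subcase (distance $2$ on a width-$5$ grid) requires a deviation \emph{orthogonal} to the line of players together with an explicit count showing $(\lceil m/2\rceil-1)\cdot 3 > m$. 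Similarly, the configuration where the players form an ``L'' at the exact center of the grid needs its own ad hoc move. None of this can be waved through as ``routine'' without actually exhibiting the moves and verifying the strict inequalities, so as it stands the proposal has a genuine gap.
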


\begin{proof}
  First,
  we assume that~$\Delta_x + \Delta_y\ge 2$,
  as otherwise there would be at least two players on the same position
  (so each one of them can improve by moving to any free vertex).
  Without loss of generality,
  we also assume that~$\Delta_x \le \Delta_y$,
  leaving the following cases to consider (depicted in~\autoref{fig:caseslemclose}).
    
\begin{enumerate}[\bf {Case} 1{:}]

\item
Let~$(x_1,y_1) = (x,y) \in [m]\times[n-2]$,
$(x_2,y_2)=(x,y+1)$, and $(x_3,y_3)=(x,y+2)$
be the positions of the three players.
Clearly, with these positions, Player~1
gets a payoff of~$my$, Player~2's payoff is~$m$, and
Player~3 gets a payoff of~$m(n-y-1)$.
Now, if~$y\ge 3$, then Player~2 can improve by
choosing~$(x,y-1)$ thus achieving a payoff of~$m(y-1)>m$.
If~$y=1$, then Player~2 improves by playing on~$(x,y+3)$
with a payoff of~$m(n-y-2) > m$ (remember that~$n\ge 5$).
Also, if~$y=2$ and~$n>5$, then Player~2 gets a higher payoff
by choosing~$(x,y+3)$.
For~$y=2$ and~$n=5$, we observe that either~$x\ge \lceil m/2\rceil$
or~$m-x\ge \lceil m/2\rceil$ and assume,
without loss of generality,
that~$x\ge \lceil m/2\rceil$ holds.
Then, applying~\autoref{obs:uniqueshortestpath},
it holds that, if Player~2 chooses~$(x-1,y)$,
then all positions in~$[x-1]\times [y+1]$ are colored in color~2,
Thus, the payoff of Player~$2$ is at least~$(\lceil m/2\rceil-1)3 >
m$ (remember that~$m\ge 5$).

  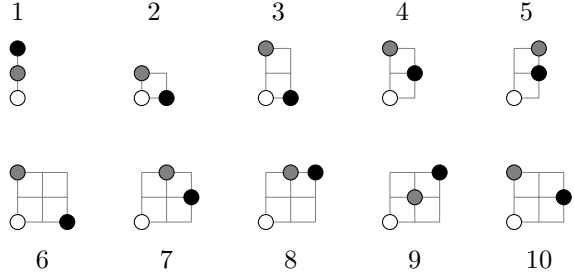
\begin{figure}[t]
    \centering 
    \begin{tikzpicture}[scale=.33]
      \draw[help lines] (1,6) grid (1,8);
      \node[p1] at (1,6) {};
      \node[p2] at (1,7) {};
      \node[p3] at (1,8) {};
      \node at (1,9.5) {1};
      \draw[help lines] (6,6) grid (7,7);
      \node[p1] at (6,6) {};
      \node[p2] at (6,7) {};
      \node[p3] at (7,6) {};
      \node at (6.5,9.5) {2};
      \draw[help lines] (11,6) grid (12,8);
      \node[p1] at (11,6) {};
      \node[p2] at (11,8) {};
      \node[p3] at (12,6) {};
      \node at (11.5,9.5) {3};
      \draw[help lines] (16,6) grid (17,8);
      \node[p1] at (16,6) {};
      \node[p2] at (16,8) {};
      \node[p3] at (17,7) {};
      \node at (16.5,9.5) {4};
      \draw[help lines] (21,6) grid (22,8);
      \node[p1] at (21,6) {};
      \node[p2] at (22,8) {};
      \node[p3] at (22,7) {};
      \node at (21.5,9.5) {5};
      \draw[help lines] (1,1) grid (3,3);
      \node[p1] at (1,1) {};
      \node[p2] at (1,3) {};
      \node[p3] at (3,1) {};
      \node at (2,-0.5) {6};
      \draw[help lines] (6,1) grid (8,3);
      \node[p1] at (6,1) {};
      \node[p2] at (7,3) {};
      \node[p3] at (8,2) {};
      \node at (7,-0.5) {7};
      \draw[help lines] (11,1) grid (13,3);
      \node[p1] at (11,1) {};
      \node[p2] at (12,3) {};
      \node[p3] at (13,3) {};
      \node at (12,-0.5) {8};
      \draw[help lines] (16,1) grid (18,3);
      \node[p1] at (16,1) {};
      \node[p2] at (17,2) {};
      \node[p3] at (18,3) {};
      \node at (17,-0.5) {9};
      \draw[help lines] (21,1) grid (23,3);
      \node[p1] at (21,1) {};
      \node[p2] at (21,3) {};
      \node[p3] at (23,2) {};
      \node at (22,-0.5) {10};
    \end{tikzpicture}
    \caption{Possible positions (up to symmetry) of three players playing inside a
      subgrid of size at most~$3\times 3$. The position of Player~1
      (white) is denoted~$(x,y)$.}
    \label{fig:caseslemclose}
  \end{figure}
  
\item
Let~$(x_1,y_1)=(x,y)\in[m-1]\times[n-1]$,
$(x_2,y_2)=(x,y+1)$, and $(x_3,y_3)=(x+1,y)$
be the positions of the three players.
Note that,
due to symmetry,
this is the only case we have to consider (without loss of generality).
Clearly, exactly the positions in~$[x]\times[y]$ are colored in color~1,
therefore Player 1 has a payoff of~$xy$.
If~$x< m/2$, then position~$(x+2,y)$ yields a
payoff of $(m-x-1)n\ge (\lceil m/2\rceil-1)n$ for Player~1 since she colors all vertices in
$[x+2,m]\times [n]$. Note that~$\lceil m/2\rceil-1 \ge x$ and~$n>y$, thus
Player~1's payoff improved.
Analogously, for~$y< n/2$, Player~1's payoff
from position~$(x,y+2)$ is at least~$m(\lceil n/2\rceil-1) >xy$.
If~$x > \lceil m/2\rceil$, then Player~3 can improve by choosing~$(x-1,y)$.
To see that this is true, note first that by
choosing~$(x_3,y_3)$, Player~3 colors only positions in~$[x+1,m]\times[n]$.
Now, observe that if position~$(x',y')$ is colored in color~3
when Player~3 chooses~$(x_3,y_3)$, then
it holds also that position~$(x-(x'-x),y)$ is colored in color~3
when Player~3 chooses~$(x-1,y)=(x-(x_3-x),y)$,
due the symmetries,
and since the distances from this position to the players positions
are all identical.
Hence, Player~3 colors at least the same number of
positions in~$[x-(m-x),x-1]\times[y]$.
Note that~$x-(m-x) \ge 2$ since~$x > \lceil m/2 \rceil$.
By \autoref{obs:uniqueshortestpath},
Player~3 additionally colors position~$(1,1)$
yielding a strictly greater payoff.
By analogous arguments, for~$y > \lceil n/2\rceil$, the position~$(x,y-1)$ yields
a better payoff for Player~2.
Finally, assume~$x=\lceil m/2\rceil$ and~$y=\lceil
n/2\rceil$.
Now, Player~1 can improve by choosing~$(x-1,y+1)$,
thus coloring at least all positions in~$[x-1]\times[n]$,
giving a payoff of at least~$(\lceil m/2\rceil-1)n > \lceil
m/2\rceil\cdot \lceil n/2\rceil=xy$ for~$m\ge 5$ and~$n\ge 5$.

\item
Let~$(x_1,y_1)=(x,y)\in[m-1]\times[n-2]$, $(x_2,y_2)=(x,y+2)$, and
$(x_3,y_3)=(x+1,y)$ be the positions of the three players.
First, note that Player~1 colors all positions
in~$[x]\times[y]$, gaining a payoff of~$xy$, Player~2 colors
$[m]\times[y+2,n]$, gaining a payoff of~$m(n-y-1)$, and
Player~3 gets all positions in~$[x+1,n]\times[y+1]$, gaining a
payoff of exactly~$(m-x)(y+1)$.
Now, if~$y=1$, then Player~1 can choose~$(x,4)$ to obtain a
payoff of~$m(n-3) > x$, since~$m>x$ and~$n-3\ge 2$.
Hence, assume~$y > 1$, and observe that both the payoff of Player~1
from~$(x+1,y-1)$ and the payoff of Player~3 from~$(x,y-1)$ equals
$m(y-1)$. Assuming that we have a \nash, we obtain the two
inequations~$xy \ge m(y-1)$ and~$(m-x)(y+1)\ge m(y-1)$, which
yield~$m(y-1)/y\le x \le 2m/(y+1)$.
Note that we obtain a contradiction for~$y\ge 3$.
Hence, we can assume that~$y=2$ and~$\lceil m/2\rceil \le x \le \lfloor
2m/3\rfloor$.
If~$n\ge 6$, then Player~1 can improve by choosing~$(x,5)$ achieving
a payoff of~$m(n-4)\ge 2m > 2x$. Thus, we also assume~$n=5$.
Now, Player~1 can choose position~$(x-1,4)$ to color all but
three positions in~$[x-1]\times[5]$. The only positions which she does
not color are~$(x-1,2)$, $(x-1,1)$ and~$(x-2,1)$.
Her payoff is thus~$5(x-1)-3$,
which, for all~$x\ge\lceil m/2\rceil \ge 3$,
is more than~$2x$.
    
\item
Let~$(x_1,y_1)=(x,y)\in[m-1]\times[n-2]$, $(x_2,y_2)=(x,y+2)$, and
$(x_3,y_3)=(x+1,y+1)$ be the positions of the three players.
It is easy to see that, if $x = m-1$,
then Player~3's payoff is exactly one,
and she can gain more by choosing~$(x-1,y+1)$ instead.
For $x < m-1$, note that, apart from~$(x_3,y_3)$, Player~3
colors only positions~$(x',y')$ with~$x'\ge x_3+1=x+2$.
Note also that Player~3 does not color all of these positions.
For example, at least one of the positions~$(x+2,y-1)$
or~$(x+2,y+3)$ exists on the grid (since $n\ge 5$) and is
reached by Player~1 or Player~2 at the same time during the
propagation process of the game.
However, by choosing~$(x+2,y+1)$, Player~3 still colors
the position~$(x_3,y_3)$ and clearly all positions~$(x',y')$
with~$x'\ge x+2$, thus improving her payoff.

\item
Let~$(x_1, y_1) = (x, y) \in [m- 1] \times [n- 2]$,
$(x_2, y_2) = (x + 1, y + 2)$,
and $(x_3, y_3) = (x + 1, y + 1)$
be the positions of the three players.
If $y = n - 2$, then Player~2's payoff is exactly~$m$.
Therefore, she increases her payoff by moving to $(y - 1, x)$,
because then her payoff is at least~$2m$.
Thus, we can assume that $y < n - 2$.
If $x \geq m - 2$,
then Player~3's payoff is either~1 or~3
(it is 1 if $x = m - 1$ and 3 if $x = m - 2$).
Therefore, she increases her payoff by moving to $(x - 2, y)$,
because then her payoff is at least~$n$ (and $n \geq 5$).
Thus, we can assume that $x < m - 2$.
We are left only with the case where $y < n - 2$ and $x < m - 2$.
In this case, Player~3 increases her payoff by moving to $(x +
2, y + 2)$.

\item
Let~$(x_1, y_1) = (x, y)\in[m-2]\times[n-2]$,
$(x_2, y_2) = (x, y + 2)$,
and $(x_3, y_3) = (x + 2, y)$
be the positions of the three players.
It is clear that only a strict subset of the positions in~$[x+2,m]\times[n]$
are colored in color~3 (for example, the position~$(x+2,y+2)$ is not colored in color~3).
By choosing~$(x+2,y+1)$, however, all positions in~$[x+2,m]\times[n]$
are colored in color~3, resulting in a strictly higher payoff for Player~3.
    
\item
Let~$(x_1, y_1) = (x, y)\in[m-2]\times[n-2]$,
$(x_2, y_2) = (x + 1, y + 2)$,
and $(x_3, y_3) = (x + 2, y + 1)$
be the positions of the three players.
By~\autoref{prop:monotonemove},
Player~1 increases her payoff by moving to~$(x_1+1, y_1+1)$.

\item
Let~$(x_1, y_1) = (x, y)\in[m-2]\times[n-2]$,
$(x_2, y_2) = (x + 1, y + 2)$,
and $(x_3, y_3) = (x + 2, y + 2)$
be the positions of the three players.
By~\autoref{prop:monotonemove},
Player~1 increases her payoff by moving to~$(x_1+1, y_1+1)$.
         
\item
Let~$(x_1, y_1) = (x, y)\in[m-2]\times[n-2]$,
$(x_2, y_2) = (x + 1, y + 1)$,
and $(x_3, y_3) = (x + 2, y + 2)$
be the positions of the three players.
Notice that the payoff of Player~2 is only one.
It is clear that Player~2 increases her payoff more by moving to~$(x, y + 1$),
because then her payoff is at least two,
as she also colors the position~$(x, y + 2)$.

\item
Let~$(x_1, y_1) = (x, y)\in[m-2]\times[n-2]$,
$(x_2, y_2) = (x, y + 2)$,
and $(x_3, y_3) = (x + 2, y + 1)$
be the positions of the three players.
Note first that Player~1's colors exactly the positions in~$[x+1]\times[y]$,
thus, her payoff is~$(x+1)y$.
For $y = 1$, Player~1 can move to~$(x,4)$, achieving a payoff
of at least~$(x+1)(n-3)\ge 2(x+1)$ since she colors all
positions in~$[x+1]\times[4,n]$.
Otherwise, if $y \ge 2$,
then, by choosing~$(x+1,y)$, Player~1 still colors all
positions in~$[x+1]\times[y]$, and additionally also the position~$(x+2,y-1)$.
\end{enumerate}
\end{proof}

\section{Hypercubes}
\label{sec:hypercubes}

\citet{etesami2014complexity} studied diffusion games on hypercubes and
proved the existence of a \nash for two players on every~$d$-dimensional hypercube.
In this section, we extend their result to four players.

Recall that the vertices of~$H_d$ are all binary strings of length~$d$, where two vertices are adjacent if and only if they differ in exactly one bit. Moreover, the geodesic distance of two vertices~$u$ and~$v$ is exactly the Hamming distance~$\ham(u,v)$. 
By~$\overline{a}$, we denote the \emph{complement} of~$a=a_1\ldots a_d$, where~$\overline{a}_i := 1-a_i$ for all~$i\in[d]$.

We prove the following theorem.

\begin{theorem}\label{thm:hypercubes}
  Let~$x,y\in\{0,1\}^d$ be two adjacent vertices of~$H_d$ for $d\ge 1$.
  Then, every strategy profile~$(p_1,\ldots,p_4)$ with $\{p_1,\ldots,p_4\}=\{x,\overline{x},y,\overline{y}\}$ is a \nash.
\end{theorem}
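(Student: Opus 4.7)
The plan is to identify each player's payoff with the size of her strict Voronoi cell---via Observation~\ref{obs:uniqueshortestpath} and a parallel argument showing that contested vertices are removed during propagation---and then to show by direct counting that no deviation can enlarge this cell. By the automorphisms of $H_d$, I may assume $x = 0^d$ and $y = 10^{d-1}$, so that $\overline{x} = 1^d$ and $\overline{y} = 01^{d-1}$. The group generated by complementation $v \mapsto \overline{v}$ and by flipping bit~$1$ acts transitively on $\{x, \overline{x}, y, \overline{y}\}$; relabelling the players if necessary, I focus on Player~$1$ sitting at~$x$.

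By Observation~\ref{obs:uniqueshortestpath}, every vertex uniquely closest to~$x$ is colored by Player~$1$. Conversely, for any vertex $v$ equidistant at distance $t$ to $x$ and to another player's position~$p$, the sets $D_x := \{j : v_j \neq x_j\}$ and $D_p := \{j : v_j \neq p_j\}$ turn out to be disjoint in our configuration, so $v$ has a neighbor uniquely closest to $x$ (colored $1$ at time $t-1$) and another neighbor uniquely closest to $p$ (colored with $p$'s color at time $t-1$); thus $v$ is removed at time~$t$. Hence Player~$1$'s payoff equals $|V_x|$, where $V_x := \{v : \Delta(v, x) < \Delta(v, p) \text{ for all } p \in \{\overline{x}, y, \overline{y}\}\}$. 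Pairwise comparison of Hamming distances gives $V_x = \{v : v_1 = 0 \text{ and } w(v) < (d-1)/2\}$, with $w$ denoting Hamming weight.

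Now consider a deviation by Player~$1$ to $x' \neq x$. If $x' \in \{\overline{x}, y, \overline{y}\}$, her starting vertex is shared with another player and her payoff drops to~$0$. Otherwise her payoff equals $|V'_{x'}|$, the strict Voronoi cell of $x'$ against $\{\overline{x}, y, \overline{y}\}$, and I must show $|V'_{x'}| \leq |V_x|$. Using the further symmetry permuting bits $2, \ldots, d$ (which stabilizes the four original positions), I may assume $x' = x'_1 1^a 0^{d-1-a}$ for some $a \in [d-1]$ and $x'_1 \in \{0, 1\}$. Splitting $V'_{x'}$ according to $v_1$ and writing $s$ for the number of $1$-bits of $v$ among positions $j \geq 2$ with $x'_j = 1$ and $t$ for the number among positions $j \geq 2$ with $x'_j = 0$, the three defining Voronoi inequalities translate into explicit bounds on $s$ and $t$ depending on $v_1$ and $x'_1$. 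Then $|V'_{x'}|$ becomes a sum of products of partial binomial sums $\sum_s \binom{a}{s}$ and $\sum_t \binom{d-1-a}{t}$.

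The main obstacle is the combinatorial inequality $|V'_{x'}| \leq |V_x| = \sum_{k=0}^{\lfloor (d-2)/2 \rfloor} \binom{d-1}{k}$. In each of a handful of subcases (determined by $x'_1$ and the parity of $d$) this reduces to elementary inequalities on partial binomial sums, which follow from Pascal's identity together with the unimodality of binomial coefficients.
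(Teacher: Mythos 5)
Your overall plan---identify each payoff with the size of a strict Voronoi cell and then count---is the same as the paper's, but the step where you argue that contested vertices are removed (the content of the paper's Lemma~\ref{lem:standoff}) has a genuine gap, and it is exactly the step where that lemma does its real work. First, the claim that for a vertex $v$ equidistant from $x$ and from another occupied vertex $p$ the sets $D_x=\{j:v_j\neq x_j\}$ and $D_p=\{j:v_j\neq p_j\}$ are disjoint is false as stated: for $d=3$, $x=000$, $\overline y=011$ and $v=110$ we have $\ham(v,x)=\ham(v,\overline y)=2$ but $1\in D_x\cap D_{\overline y}$. (The pairwise conclusion can be rescued without disjointness, since $|D_x|=|D_p|$ and $x\neq p$ force $D_x\setminus D_p\neq\emptyset$, so some bit flip moves toward $x$ and away from $p$.) Second, and more seriously, you only treat two-way ties. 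The standoff property is needed precisely for the \emph{deviated} profiles $(p,\overline x,y,\overline y)$ with $p$ arbitrary---otherwise you have no upper bound on the deviator's payoff---and there a vertex can be equidistant from \emph{three} of the occupied positions. In that situation a bit flip that brings $v$ closer to one tied player may simultaneously bring it closer to another tied player, so the resulting neighbour is not uniquely closest to anyone and Observation~\ref{obs:uniqueshortestpath} yields nothing. The paper's Lemma~\ref{lem:standoff} devotes its two main cases to exhibiting, in the three-way-tie situation, coordinates whose flips separate one tied player from all the others; that existence argument (which uses $p\notin\{x,\overline y\}$ and parity constraints on pairwise distances) is absent from your proposal, and without it the identification of the deviator's payoff with $|V'_{x'}|$ is unproved.

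Two smaller points. Your final inequality $|V'_{x'}|\le|V_x|$ is only asserted (``a handful of subcases \ldots elementary inequalities on partial binomial sums''); the paper's Lemma~\ref{lem:closestvertexbound} gets it with less casework by bounding the cell of an arbitrary $x'$ against the antipodal pair $\{y,\overline y\}$ alone, which already gives exactly the undeviated payoff $2^{d-2}$ for even $d$ and $2^{d-2}-\tfrac12\binom{d-1}{(d-1)/2}$ for odd $d$, so the third opponent can be ignored. Also, the case $d=1$ (where $\overline x=y$ and the four prescribed positions collide) falls outside your Voronoi setup and needs the separate trivial treatment the paper gives it.
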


A first step to prove \autoref{thm:hypercubes} is to show that for a strategy profile of the form as described in \autoref{thm:hypercubes}, it holds that whenever a single player chooses another position, then for the resulting strategy profile the payoff of each player equals exactly the number of vertices to which she has a unique closest distance.
Note that we already know from \autoref{obs:uniqueshortestpath} that each player always colors all the vertices to which she has the uniquely closest distance. In \autoref{lem:standoff} we will show the opposite direction, namely that no player obtains vertices to which the distance is not the unique closest distance among all players (note that this does not hold in general).
In a second step, we can then compute the payoffs of all players and show
that they are maximal for the strategy profile stated in \autoref{thm:hypercubes}.

We start with the following lemma.

\begin{lemma}
  \label{lem:standoff}
  Let~$x,y\in\{0,1\}^d$ be two adjacent vertices in~$H_d$, $d\ge 1$, and let
  $(p_1,\ldots,p_4)$ be a strategy profile with $\{x,y,\overline{y}\}\subseteq\{p_1,\ldots,p_4\}$ and~$p_i\neq p_j$ for all~$i,j\in[4]$, $i\neq j$.
  Let~$v\in\{0,1\}^d$ be a vertex and let~$\delta:=\min_{i=1,\ldots,4}\ham(p_i,v)$.
  If there exist~$i,j\in[4]$, $i\neq j$, such that~$\ham(p_i,v) = \ham(p_j,v)=\delta$, then
  the vertex~$v$ will not be colored by any player at the end of the propagation process.
\end{lemma}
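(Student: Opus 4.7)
The plan is to apply \autoref{obs:uniqueshortestpath} to two suitably chosen neighbors of $v$. Let $T$ denote the set of players at distance exactly $\delta$ from $v$; by hypothesis $|T| \ge 2$. I will find two distinct players $p_A, p_B \in T$ and neighbors $u_A, u_B$ of $v$ such that $u_A$ is strictly closer to $p_A$ than to any other player (and analogously for $u_B$). Then \autoref{obs:uniqueshortestpath} colors $u_A$ with $p_A$'s color and $u_B$ with $p_B$'s color at time~$\delta - 1$, and hence at time~$\delta$ the still-uncolored vertex~$v$ has neighbors of two distinct colors and is removed. The base case $\delta = 1$ is immediate, as $p_A$ and $p_B$ themselves are neighbors of $v$ colored at time~$0$.

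Write $S_A := \{i \in [d] : (p_A)_i \neq v_i\}$, so $|S_A| = \delta$ for each $p_A \in T$. For $i \in S_A$ the neighbor $v \oplus e_i$ (obtained by flipping the $i$-th bit of $v$) is at distance $\delta - 1$ from $p_A$ and at distance $\ham(p_C, v) \pm 1$ from any other player $p_C$. Any $p_C \notin T$ has $\ham(p_C,v) \ge \delta+1$ and thus keeps distance at least~$\delta > \delta - 1$ to $v \oplus e_i$; so the condition that $v\oplus e_i$ has $p_A$ as its \emph{unique} closest player reduces to $i \in S_A \setminus \bigcup_{C \in T \setminus \{A\}} S_C$.

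The structure $\{x,y,\bar y\} \subseteq \{p_1,\dots,p_4\}$ supplies three useful facts: (i) $\ham(x,v)$ and $\ham(y,v)$ differ by exactly one, so at most one of $x,y$ lies in $T$ and hence $|T| \le 3$; (ii) $S_{\bar y} = [d] \setminus S_y$, so $S_y \cap S_{\bar y} = \emptyset$; and (iii) if $\{x,\bar y\} \subseteq T$, the identity $\ham(x,\bar y) = d - 1 = |S_x| + |S_{\bar y}| - 2|S_x \cap S_{\bar y}|$ together with $|S_x| = |S_{\bar y}| = \delta$ forces $\delta = (d-1)/2$ and $S_x \cap S_{\bar y} = \emptyset$. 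The case analysis is then short. If $|T| = 2$, take $\{A,B\} = T$; since $p_A \neq p_B$ and $|S_A| = |S_B| = \delta$, we get $S_A \neq S_B$, so both $S_A \setminus S_B$ and $S_B \setminus S_A$ are non-empty. If $|T|= 3$, then by~(i) $T$ equals $\{y,\bar y,p_4\}$ or $\{x,\bar y,p_4\}$; choose $\{A, B\} := T\setminus\{p_4\}$, so that by~(ii) or~(iii) we have $S_A \cap S_B = \emptyset$, and the remaining task is $S_A \setminus S_{p_4} \neq \emptyset$ (and symmetrically for $B$), which again follows from $p_A \neq p_4$ and $|S_A| = |S_{p_4}| = \delta$.

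The main subtlety is that when $|T| = 3$ one cannot hope to find a witness neighbor for \emph{every} player in $T$: in $T = \{y,\bar y,p_4\}$ the equality $S_y \cup S_{\bar y} = [d]$ means every $i \in S_{p_4}$ already lies in $S_y$ or $S_{\bar y}$, so the candidate neighbors of $v$ intended for $p_4$ are inevitably tied with another player. The argument dodges this by picking $A$ and $B$ to be the antipodal (or nearly antipodal) pair in~$T$ and exploiting the disjointness in~(ii)/(iii); only two distinct colors around~$v$ are needed to force its removal.
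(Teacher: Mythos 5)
Your proof follows essentially the same route as the paper's: exhibit two neighbors of~$v$, each uniquely closest to a different tied player, invoke \autoref{obs:uniqueshortestpath} to color them differently at time~$\delta-1$, and conclude that~$v$ is removed at time~$\delta$. Your bookkeeping with the sets~$S_A$ is a clean repackaging of the paper's bit-swapping argument, and your case split ($|T|=2$; $|T|=3$ with $T$ consisting of~$\overline y$, the free player, and exactly one of~$x,y$) matches the paper's exactly, including the observation that one can only hope to find witness neighbors for two of the three tied players. The one spot where your justification is incomplete is fact~(iii): the identity $d-1=2\delta-2|S_x\cap S_{\overline y}|$ together with $|S_x|=|S_{\overline y}|=\delta$ only yields $\delta\ge(d-1)/2$ and does not by itself force $S_x\cap S_{\overline y}=\emptyset$ (for instance $\delta=3$, $d=5$, $|S_x\cap S_{\overline y}|=1$ satisfies both constraints). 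To close this you must also use the minimality of~$\delta$ with respect to the player~$y$: since $\ham(y,v)=d-\ham(\overline y,v)=d-\delta\ge\delta$, you get $\delta\le d/2$, and since the identity forces $d$ odd, this gives $\delta\le(d-1)/2$, hence equality and the desired disjointness --- which is precisely how the paper argues this step.
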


\begin{proof}
  Since the order of the players does not matter, we assume that~$p_1=p\in\{0,1\}^d$, $p_2=x$, $p_3=y$, and~$p_4=\overline{y}$. For the case~$\delta\le 1$, the statement clearly holds by definition of the propagation process. Hence, we consider~$\delta\ge 2$. Note that for any two vertices $p_i\neq p_j$ with~$\ham(p_i,v)=\ham(p_j,v)=\delta$, the distance~$\ham(p_i,p_j)$ must be even (and thus, at least two) since two vertices with an odd distance cannot have the same distance to any other vertex.
  Hence, $x$ and~$y$ can never have the same distance~$\delta$ to~$v$ since their distance is one (they are adjacent).
  It follows that either two or three players have the same shortest distance~$\delta$ to~$v$ among all players.
  For both cases, we show that there always exist two neighbors of~$v$ each having a different player with a unique closest distance of~$\delta-1$. Using \autoref{obs:uniqueshortestpath}, we can then conclude that these neighbors are colored in different colors and that~$v$ is thus removed.

  First, assume that exactly the two players~$i$ and~$j$ have distance~$\delta$ to~$v$, while the other two players have a distance larger than~$\delta$. Then,~$p_i$ and~$p_j$ differ in~$\ham(p_i,p_j)= 2c$ bits for some~$c\ge 1$. Note that~$v$ equals~$p_i$ in exactly~$c$ of these bits and equals~$p_j$ in the other~$c$ bits (otherwise they cannot have the same distance to~$v$).
  Hence, by swapping one of the~$c$ bits where~$v$ equals~$p_j$, we reach a neighbor~$u$ of~$v$ such that $\ham(p_i,u) = \delta-1 < \ham(p_j,u)=\delta+1$.
  Analogously, swapping one of the~$c$ bits where~$v$ equals~$p_i$ yields a neighbour~$w$ with $\ham(p_j,w) = \delta-1 < \ham(p_i,w)=\delta+1$.
  Note that the other two players have distance at least~$\delta$ to both~$u$ and~$w$ since their distance to~$v$ is at least~$\delta+1$.
  Thus, $p_i$ has the unique shortest distance to~$u$, and~$p_j$ has the unique shortest distance to~$w$. According to \autoref{obs:uniqueshortestpath},~$u$ and~$w$ are thus colored in different colors at time~$\delta-1$. Consequently, $v$ is removed at time~$\delta$.

  Now, assume that exactly three players have the minimum distance~$\delta$ to~$v$. Since~$x$ and~$y$ are adjacent, we know that the only two possible cases are that~$p$, $\overline{y}$, and either~$x$ or~$y$ have distance~$\delta$ to~$v$.

  \begin{enumerate}[\bf {Case} 1:]
  
  \item $\ham(p,v)=\ham(x,v)=\ham(\overline{y},v)=\delta < \ham(y,v)$.
  Note that~$\ham(y,v) = d - \ham(\overline{y},v)$, which yields~$\delta < d/2$.
  Note further that~$x$ and~$\overline{y}$ differ in an even number of~$\ham(x,\overline{y})=d-1$ bits, which means~$d$ is odd. Hence,~$\delta \le (d-1)/2$. Moreover,~$v$ equals each of~$x$ and~$\overline{y}$ in exactly half of their~$d-1$ differing bits (otherwise~$v$ cannot have the same distance to both). It follows that~$\delta = (d-1)/2$ and it also holds that~$x_i=\overline{y}_i\neq v_i$ is not possible for any~$i\in[d]$
  (this holds since $x_i = \overline{y}$ implied $\overline{y} = v_i$, which holds since $\delta = (d - 1) / 2$) .
  Now consider the vertex~$p$ which also differs in~$\delta$ bits to~$v$. Clearly,
  $p$ and~$v$ can neither differ in the same~$\delta$ bits as~$v$ and~$x$, nor in the same~$\delta$ bits as~$v$ and~$\overline{y}$ since then~$p_1$ would be equal to either~$x$ or~$\overline{y}$. Hence, there exist indices~$i$ and~$j$ among the~$d-1$ differing bits of~$x$ and~$\overline{y}$ such that~$v_i=\overline{y}_i=p_{i}\neq x_i$ and $v_j=x_j=p_{j}\neq \overline{y}_j$.
  Thus, by swapping the~$i$th bit of~$v$, we reach a neighbor~$u$ with~$\ham(x,u)=\delta-1 < \ham(p,u)=\ham(\overline{y},u)=\delta+1$. Similarly, swapping the~$j$th bit of~$v$ yields a neighbor~$w$ with~$\ham(\overline{y},w)=\delta-1 < \ham(p,w)=\ham(x,w)=\delta+1$. Clearly, also~$y$ has distance at least~$\delta$ to both~$u$ and~$w$. Hence, by \autoref{obs:uniqueshortestpath}, $u$ and~$w$ are colored in different colors and thus~$v$ is removed.

\item $\ham(p,v)=\ham(y,v)=\ham(\overline{y},v)=\delta < \ham(x,v)$.
  Note that~$\ham(y,\overline{y})=d$ is even and that~$v$ equals both~$y$ and~$\overline{y}$ in exactly~$\delta = d/2$ bits.
  As in Case~1, the vertex~$p$ cannot differ from~$v$ in the same~$\delta$ bits as~$y$ or~$\overline{y}$. Thus, we again find indices~$i$ and~$j$ with
  $v_i=\overline{y}_i=p_{i}\neq y_i$ and $v_j=y_j=p_{j}\neq \overline{y}_j$ such that for the corresponding neighbors~$u$ and~$w$ of~$v$ we have $\ham(y,u)=\delta-1 < \ham(p,u)=\ham(\overline{y},u)=\delta+1$ and $\ham(\overline{y},w)=\delta-1 < \ham(p,w)=\ham(y,w)=\delta+1$. Since by assumption also~$x$ has distance at least~$\delta$ from~$u$ and~$w$, it follows that~$u$ and~$w$ are colored in different colors and~$v$ is removed.
  \end{enumerate}
\end{proof}

\noindent\autoref{lem:standoff} (together with \autoref{obs:uniqueshortestpath}) shows that in every strategy profile as described in \autoref{thm:hypercubes}
the payoff of each player equals the number of vertices to which she has the unique minimum distance among all players.

The following lemma gives upper bounds on the possible payoffs for players in such a profile.

\newcommand{\nicev}{V_x^{y\overline{y}}}

\begin{lemma}
  \label{lem:closestvertexbound}
  Let~$x,y\in\{0,1\}^d$, $d\ge 1$, and let $\nicev:=\{v\in\{0,1\}^d \mid \ham(v,x)<\min\{\ham(v,y),\ham(v,\overline{y})\}$. Then, $|\nicev|\le 2^{d-2}$.
  Moreoever, if~$d$ is odd, then the bound is even smaller, that is,
  \[|\nicev| \le 2^{d-2} -\frac{1}{2}\binom{d-1}{(d-1)/2}.\]
\end{lemma}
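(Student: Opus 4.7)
The plan is to exploit the vertex-transitivity of $H_d$ to reduce $|\nicev|$ to a product of partial binomial sums, and then do a small case analysis on the parities of $k:=\ham(x,y)$ and $d-k$.  Since $v\mapsto v\oplus x$ is an isometry of $H_d$, I may assume without loss of generality that $x=0^d$, so that $k=|y|$.  Partition the coordinates into $A:=\{i:y_i=1\}$ (of size $k$) and $B:=\{i:y_i=0\}$ (of size $d-k$), and for a vertex $v$ write $a,b$ for the Hamming weights of the restrictions of $v$ to $A$ and $B$.  A direct computation gives $\ham(v,x)=a+b$, $\ham(v,y)=(k-a)+b$, and $\ham(v,\overline{y})=a+(d-k-b)$, so the two inequalities defining $\nicev$ decouple into $a<k/2$ and $b<(d-k)/2$.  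Since there are $\binom{k}{a}\binom{d-k}{b}$ vertices with prescribed restriction weights, summing yields
\[
  |\nicev| \;=\; S(k)\cdot S(d-k), \qquad S(m):=\sum_{i=0}^{\lceil m/2\rceil - 1}\binom{m}{i}.
\]

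By the usual binomial identity, $S(m)=2^{m-1}$ when $m$ is odd and $S(m)=2^{m-1}-\tfrac{1}{2}\binom{m}{m/2}$ when $m$ is even, so in every case $S(m)\le 2^{m-1}$.  Multiplying these bounds immediately gives the first inequality $|\nicev|\le 2^{d-2}$.

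For the sharper bound in the case $d=2n+1$ odd, exactly one of $k,d-k$ is even; by the symmetry of the formula $S(k)\cdot S(d-k)$ I may suppose that the even one equals $2j$ with $0\le j\le n$.  A short calculation then gives
\[
  |\nicev| \;=\; 2^{d-2} \;-\; \tfrac{1}{2}\binom{2j}{j}\cdot 4^{n-j},
\]
so the desired bound reduces to proving the inequality
\[
  \binom{2j}{j}\cdot 4^{n-j} \;\ge\; \binom{2n}{n} \qquad \text{for all } 0\le j\le n.
\]
This is the only genuinely nontrivial step and will be the main obstacle.  I would handle it by induction on $n-j$, using the elementary identity
\[
  \frac{\binom{2(j+1)}{j+1}}{\binom{2j}{j}} \;=\; \frac{2(2j+1)}{j+1} \;<\; 4,
\]
which, iterated $n-j$ times, gives $\binom{2n}{n}\le 4^{n-j}\binom{2j}{j}$, as required.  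Equality throughout the overall bound is attained precisely when $\{k,d-k\}=\{1,d-1\}$, i.e.\ when $y$ differs from $x$ in exactly one bit or in exactly $d-1$ bits.
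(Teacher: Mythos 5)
Your proof is correct and essentially the same as the paper's: both reduce $|\nicev|$ to the product of two partial binomial sums over the $\ham(x,y)$ coordinates where $x,y$ differ and the $d-\ham(x,y)$ coordinates where $x,\overline{y}$ differ, bound each factor by $2^{m-1}$ (minus half a central binomial coefficient when that block has even size), and then compare the resulting deficit to $\frac{1}{2}\binom{d-1}{(d-1)/2}$. The only cosmetic difference is in that last comparison: you iterate the ratio bound $\binom{2j+2}{j+1}/\binom{2j}{j}=\frac{2(2j+1)}{j+1}<4$, while the paper uses the equivalent monotonicity of $\binom{\alpha}{\alpha/2}2^{-\alpha}=\frac{1\cdot3\cdots(\alpha-1)}{2\cdot4\cdots\alpha}$.
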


\begin{proof}
  Let $\alpha:=\ham(x,y)$; thus,~$\ham(x,\overline{y})=d-\alpha$. For any vertex $v\in \nicev$, it holds $\ham(x,v) < \ham(y,v)$ and also $\ham(x,v) < \ham(\overline{y},v)$. The first inequality implies that~$v$ equals~$x$ in more than half of the~$\alpha$ bits where~$x$ and~$y$ differ. Analogously, the second inequality implies that $v$ equals~$x$ in more than half of the~$d-\alpha$ bits where~$x$ and~$\overline{y}$ differ. Clearly, the set of bits where~$x$ and~$y$ differ is disjoint from the set of bits where~$x$ and~$\overline{y}$ differ.
  Hence, the number of possible vertices in~$\nicev$ is
  \begin{align}
    |\nicev|&=\left(\sum_{\ell=\lceil(\alpha+1)/2\rceil}^\alpha\binom{\alpha}{\ell}\right) \cdot \left(\sum_{\ell=\lceil(d-\alpha+1)/2\rceil}^{d-\alpha}\binom{d-\alpha}{\ell}\right)\label{eq:bound}\\
    &\le 2^{\alpha-1}\cdot 2^{d-\alpha-1}=2^{d-2},\nonumber
  \end{align}
  which proves the general bound. Now, for~$d$ being odd, note that either~$\alpha$ or~$d-\alpha$ is even. We assume without loss of generality that~$\alpha\le d-1$ is even (that is, $d-\alpha$ is odd). Then, Equation~(\ref{eq:bound}) can be written as
  \begin{align*}
    |\nicev|&=\left(\sum_{\ell=\alpha/2+1}^\alpha\binom{\alpha}{\ell}\right) \cdot \left(\sum_{\ell=\lceil(d-\alpha)/2\rceil}^{d-\alpha}\binom{d-\alpha}{\ell}\right)\\
    &\le \left(2^{\alpha-1} - \frac{1}{2}\binom{\alpha}{\alpha/2}\right) \cdot 2^{d-\alpha-1}\\
    &= 2^{d-2} - \left(2^{d-\alpha-2}\cdot\binom{\alpha}{\alpha/2}\right).
  \end{align*}
  We now use the following identity for the central binomial coefficient
  \[\binom{\alpha}{\alpha/2}= \frac{\alpha!}{((\alpha/2)!)^2} = 2^\alpha \cdot \frac{1\cdot 3\cdot \ldots \cdot (\alpha-1)}{2 \cdot 4 \cdot \ldots \cdot \alpha}\]
  to obtain
  \begin{align*}
    |\nicev| &\le 2^{d-2} - \left(2^{d-2}\cdot\frac{1\cdot 3\cdot \ldots \cdot (\alpha-1)}{2 \cdot 4 \cdot \ldots \cdot \alpha}\right)\\
    &\le 2^{d-2} - \left(2^{d-2}\cdot\frac{1\cdot 3\cdot \ldots \cdot (d-2)}{2 \cdot 4 \cdot \ldots \cdot (d-1)}\right)\\
    &\le 2^{d-2} - \frac{1}{2}\binom{d-1}{(d-1)/2}.
  \end{align*}
\end{proof}

We can now proceed with proving \autoref{thm:hypercubes}.

\begin{proof}[Proof of \autoref{thm:hypercubes}]
  To start with, observe that for~$d=1$ the statement clearly holds since there are only two vertices in~$H_1$, which  gives a payoff of zero for each player, and it is not possible to obtain more than zero vertices for any player (by definition of the diffusion game).

  In the following we consider~$d\ge 2$.
  Since the ordering of the players does not matter, we fix the strategy profile~$(p_1=x,p_2=\overline{x},p_3=y,p_4=\overline{y})$. Moreover,
  due to the symmetry of the hypercube, we only have to consider the case that Player~1 changes her strategy. Let us first determine the payoff of Player~1 for the above profile.
  According to \autoref{obs:uniqueshortestpath} and \autoref{lem:standoff}, we know that Player~1 obtains exactly those vertices~$v\in\{0,1\}^d$ to which she has the unique minimum distance, that is, $\ham(x,v)<\min\{\ham(\overline{x},v),\ham(y,v),\ham(\overline{y},v)\}$.
  Recall that~$x$ and~$y$ are adjacent, that is, they differ in~$\ham(x,y)=1$ bit.
  Therefore, $\ham(x,v)<\ham(y,v)$ implies that~$v$ equals~$x$ in that bit.
  Thus, we have~$\ham(y,v)=\ham(x,v)+1$, $\ham(\overline{x},v)=d-\ham(x,v)$, and~$\ham(\overline{y},v)=d-1-\ham(x,v)$. Hence, $v$ has to satisfy~$\ham(x,v)<(d-1)/2$ in order to satisfy $\ham(x,v)<\ham(\overline{y},v)$ and $\ham(x,v)<\ham(\overline{x},v)$.
  That is, $v$ is allowed to differ from~$x$ in at most~$\lfloor(d-1)/2\rfloor$ of the~$d-1$ bits where~$x$ is equal to~$y$. The payoff of Player~1 is thus \[\sum_{\ell=0}^{\lfloor(d-1)/2\rfloor}\binom{d-1}{\ell}.\]
  Note that this payoff equals~$2^{d-2}$ if~$d$ is even, and $2^{d-2} - \frac{1}{2}\binom{d-1}{(d-1)/2}$ if~$d$ is odd.

  Now, let Player~1 choose an arbitrary vertex~$p_1\in\{0,1\}^d$. Clearly, we can assume that $p_1\not\in\{\overline{x},y,\overline{y}\}$ since her payoff is zero otherwise.
  Hence, by \autoref{obs:uniqueshortestpath} and \autoref{lem:standoff}, we know again that the payoff of Player~1 equals the number of vertices in $\{v\in\{0,1\}^d \mid \ham(x,v)<\min\{\ham(\overline{x},v),\ham(y,v),\ham(\overline{y},v)\}\}$.
  By \autoref{lem:closestvertexbound}, we know that this number is at most~$2^{d-2}$ if $d$ is even, and at most $2^{d-2} - \frac{1}{2}\binom{d-1}{(d-1)/2}$ if $d$ is odd.
  Therefore, Player~1 cannot increase her payoff by changing her strategy, which finishes the proof.
\end{proof}

The existence of \nashs on hypercubes for three players as well as for more than four players  remains open in general.

\section{General Graphs}
\label{sec:general}

  \begin{figure}[h]
    \center
    \begin{tikzpicture}[draw=black!75, scale=0.75,-]
      \tikzstyle{vertex}=[circle,draw=black!80,minimum size=12pt,inner sep=0pt]

      \foreach [count=\i] \pos / \text in {
        {(0,2)}/,
        {(1,1)}/,
        {(1,3)}/,
        {(2,1.5)}/,
        {(2,2.5)}/,
        {(3,1)}/,
        {(3,3)}/,
        {(4,2)}/       
      }
      {
        \node[vertex] (V\i) at \pos {\text};
      }

      \foreach \i / \j in {1/2,2/4,4/5,5/3,3/1,4/6,6/8,8/7,7/5,2/6,3/7,6/5,7/4} {
        \path[] (V\i) edge (V\j);
      }
    \end{tikzpicture}
    \caption{A graph on 8 vertices with no \nash for two players.}\label{fig:weirdgraph}
  \end{figure}
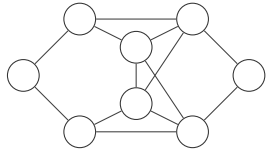

In this section, we study the existence of \nashs
on arbitrary graphs.
Using computer simulations, we found that for two players, a \nash
exists on any graph with at most~$n=7$ vertices.
For~$n=8$, we obtained the graph depicted in~\autoref{fig:weirdgraph},
for which there is no \nash for two players.
As it is clear that adding isolated vertices
to the graph in~\autoref{fig:weirdgraph} does not allow
for a \nash, we conclude the following.

\begin{corollary}
  For two players, there is a \nash on each $n$-vertex graph if
  and only if~$n \le 7$.
\end{corollary}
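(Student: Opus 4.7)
The plan is to combine the two computational observations stated immediately before the corollary with a short extension argument for $n \ge 9$. The direction ``$n \le 7$ implies every graph admits a Nash equilibrium'' is entirely computational: the number of non-isomorphic graphs of order at most seven is in the low thousands, so one iterates over each graph $G$ and each ordered strategy profile $(v_1, v_2) \in V(G)^2$, simulates the deterministic propagation of Section~\ref{sec:prelim}, computes the resulting payoffs, and tests every unilateral deviation. Whenever no deviation strictly improves either player's payoff, one has exhibited a pure Nash equilibrium for~$G$.

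The substantive step is showing that the explicit 8-vertex graph~$H$ of Figure~\ref{fig:weirdgraph} admits no two-player Nash equilibrium. I would proceed by a direct case analysis on the $64$ ordered strategy profiles, exploiting the top-bottom automorphism of~$H$ (which swaps $2 \leftrightarrow 3$, $4 \leftrightarrow 5$, $6 \leftrightarrow 7$ and fixes the two degree-$2$ vertices~$1$ and~$8$) to roughly halve the number of representatives; for each representative profile $(v_1, v_2)$ one exhibits an explicit improving deviation for one of the two players. This is the main obstacle in the sense that it is the only part requiring genuine case work, but each case reduces to a short payoff computation supported by \autoref{obs:uniqueshortestpath}.

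Finally, for $n \ge 9$, let $G_n$ denote the disjoint union of~$H$ with $n - 8$ isolated vertices and suppose towards a contradiction that $(v_1, v_2)$ is a Nash equilibrium of $G_n$. If both $v_1, v_2 \in V(H)$, then isolated vertices do not influence propagation on the $H$-component, so the payoffs coincide with those of the two-player game on $H$ and any improving deviation within $V(H)$ transfers back; thus $(v_1, v_2)$ would be a Nash equilibrium of $H$, contradicting the previous step. Otherwise some player, say Player~$1$, chooses an isolated vertex and so receives payoff exactly~$1$; but then Player~$2$ plays alone on the connected graph~$H$ and colors all eight of its vertices, and a short inspection of~$H$ shows that for every $v_2 \in V(H)$ there is a choice $v_1' \in V(H)$ giving Player~$1$ at least two vertices in the two-player game on~$H$ (for instance, any $v_1'$ at distance $\ge 2$ from $v_2$ having a neighbor not adjacent to $v_2$, which always exists since $H$ has diameter~$3$ and minimum degree~$2$), so Player~$1$ can strictly improve by moving to $v_1'$, contradiction. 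Combining the three ingredients yields the corollary.
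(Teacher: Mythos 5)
Your proposal matches the paper's argument: exhaustive verification over all graphs on at most seven vertices, a case analysis of the explicit eight-vertex counterexample of \autoref{fig:weirdgraph}, and padding with isolated vertices for $n\ge 9$ (the paper leaves the first two steps to computer search and the last to a one-line remark, whereas you additionally spell out why a player sitting on an isolated vertex can always improve by re-entering the eight-vertex component). There is no substantive difference in approach.
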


For more than two players, we can show the following.

\begin{theorem}\label{thm:general}
  For any $k > 2$ and any $n\ge\floor{\frac{3}{2}k} + 2$,
  there exists a tree with $n$ vertices such that there
  is no \nash for $k$ players.
\end{theorem}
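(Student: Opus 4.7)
The plan is to generalize the three-player counterexample~$P_6$ of \autoref{lem:threeonpath} by attaching ``parking'' subtrees that absorb the extra $k - 3$ players, so the non-existence of a \nash still comes from a $P_6$-like core. Concretely, I would construct~$T_k$ by starting from a backbone $P_6 = v_1 v_2 \cdots v_6$ and, for each pair of extra players, attaching at~$v_1$ a \emph{cherry} gadget (a center $a_i$ adjacent to $v_1$ together with two pendant leaves $b_i,c_i$ adjacent to $a_i$); if $k-3$ is odd, I would additionally attach a pendant edge at $v_1$ (a new vertex $u$ adjacent to $v_1$ carrying a leaf $w$). Each cherry contributes three vertices and absorbs two players, while the pendant edge contributes two vertices and absorbs one, so the total vertex count is exactly $\floor{3k/2}+2$.

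Assume toward a contradiction that $(p_1,\ldots,p_k)$ is a \nash on~$T_k$. The proof would split into two stages. The first is a \emph{parking lemma}: in every cherry both leaves must be occupied (and the pendant edge's leaf, if present, must be occupied), trapping $k-3$ players, each earning payoff one, and removing the cherry centers and pendant edge interior from the game at time one by color conflict. The argument is that an unoccupied leaf is a cheap, uncontested gain: a short case analysis on configurations with a free leaf exhibits a player with a strict improvement by moving onto the cherry. The second stage is a \emph{reduction}: with parking established, the gadgets are effectively disconnected from the backbone for the three remaining central players, who therefore play a game strategically equivalent to three players on $P_6$; \autoref{lem:threeonpath} then yields a profitable deviation for one of them, contradicting the assumption.

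The main obstacle will be the parking lemma, because one must simultaneously rule out ``backbone raids'' (a central player abandoning $P_6$ to grab a whole cherry) and ``leaf escapes'' (a parked player moving away from a leaf to a more central position). Gadget placement matters here: attaching every cherry at the single vertex $v_1$ bounds the raid payoff by a constant and keeps the backbone in a $P_6$-like state, while making leaf escapes unattractive since the backbone is already contested by three players. A separate small tweak may also be needed for the base case $k = 4$, since \autoref{lem:evenk} gives a \nash for four players on $P_8$, so $T_4$ must genuinely branch rather than collapse to a path.
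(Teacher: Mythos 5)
Your approach is genuinely different from the paper's (which uses a spider: a short path ending in a center vertex to which roughly $k/2$ pendant copies of $P_3$ are attached, analyzed by a pigeonhole argument over the branches), but unfortunately it does not work: the tree you build admits a \nash, so the construction itself is wrong, not just the write-up of the parking lemma. The root of the problem is that a cherry need not be occupied on both leaves in an equilibrium; it can be occupied at its \emph{center}~$a$ and one leaf~$b$. Then the player on~$a$ collects the second leaf~$c$ for free and reaches payoff two, while the player on~$b$ is stuck at payoff one with no profitable escape, because once the backbone is saturated every free vertex of the tree also yields her payoff exactly one. Concretely, for~$k=5$ your tree is $P_6=v_1\cdots v_6$ with one cherry $a,b,c$ at~$v_1$, and the profile $(a,b,v_1,v_4,v_5)$ is a \nash: the payoffs are $2,1,2,2,2$ (Player~1 gets $\{a,c\}$, Player~3 gets $\{v_1,v_2\}$, Player~4 gets $\{v_4,v_3\}$, Player~5 gets $\{v_5,v_6\}$), and for each player every free vertex among $c,v_2,v_3,v_6$ yields payoff at most her current one (each such vertex is adjacent only to occupied vertices or to vertices that are removed by a color conflict). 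In effect $a,v_1,v_2,\ldots,v_6$ is a $P_7$ hosting four players in the paired pattern of \autoref{lem:evenk}, with the cherry donating a private extra vertex to the player at~$a$ and absorbing one sacrificed player at~$b$.

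The even case fails even more directly: for~$k=4$ your tree is literally~$P_8$ (namely $w,u,v_1,\ldots,v_6$), which has a \nash by \autoref{lem:evenk}, and for~$k=6$ the profile $(a,b,u,v_1,v_4,v_5)$ is a \nash by the same computation. So the issue is not the announced ``small tweak'' for~$k=4$: any gadget hung off~$v_1$ that a single player can monopolize from its center, or that extends the backbone by a pendant path, re-enables the stable paired configurations of \autoref{lem:evenk}. This is exactly what the paper's construction is designed to prevent: there is no long path for pairs to settle on, every branch is a pendant~$P_3$ attached to a single center, and with about $k/2$ branches for $k$ players the pigeonhole principle always produces either an under-occupied branch that someone should invade or an over-occupied one that someone should leave. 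If you want to salvage a ``core plus parking'' strategy, the parking gadgets must be such that no single player can profitably claim a whole gadget and no gadget lengthens the backbone; as it stands, both failure modes occur.
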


\begin{proof}
  We describe a construction only for $n=\floor{\frac{3}{2}k} + 2$,
  as we can add arbitrarily many isolated vertices without introducing
  a \nash.

  We first describe the construction for $k$ being odd.
  We create one $P_3$,
  whose vertices we denote by $u_1$, $u_2$, and $u_3$,
  such that $u_2$ is the middle vertex of this~$P_3$.
  For each $i \in [2, \ceil{\frac{k}{2}}]$,
  we create a copy of $P_3$,
  denoted by $P_i$,
  whose vertices we denote by $v_{i, 1}$, $v_{i, 2}$, and $v_{i, 3}$,
  such that $v_{i, 2}$ is the middle vertex of $P_i$.
  For each $i \in [2, \ceil{\frac{k}{2}}]$,
  we connect $v_{i, 1}$ to $u_3$.
  An example for~$k = 9$ is depicted in \autoref{fig:seven}.

  \begin{figure}[t]
    \center
    \begin{tikzpicture}[draw=black!75, scale=0.75,-]
      \tikzstyle{vertex}=[circle,draw=black!80,minimum size=12pt,inner sep=0pt]

      \foreach [count=\i] \pos / \text in {
        {(0,1.5)}/$u_1$,
        {(1,1.5)}/$u_2$,
        {(2,1.5)}/$u_3$,
        {(3,0)}/$v_{1,1}$,
        {(4,0)}/$v_{1,2}$,
        {(5,0)}/$v_{1,3}$,
        {(3,1)}/$v_{2,1}$,
        {(4,1)}/$v_{2,2}$,
        {(5,1)}/$v_{2,3}$,
        {(3,2)}/$v_{3,1}$,
        {(4,2)}/$v_{3,2}$,
        {(5,2)}/$v_{3,3}$,
        {(3,3)}/$v_{4,1}$,
        {(4,3)}/$v_{4,2}$,
        {(5,3)}/$v_{4,3}$}
      {
        \node[vertex] (V\i) at \pos {\text};
      }

      \foreach \i / \j in {1/2,2/3,3/4,4/5,5/6,3/7,7/8,8/9,3/10,10/11,11/12,3/13,13/14,14/15} {
        \path[] (V\i) edge (V\j);
      }
    \end{tikzpicture}
    \caption{A tree with no \nash for $9$ players.}\label{fig:seven}
  \end{figure}

  To see that there is no \nash for the constructed graph,
  consider first strategy profiles for which $u_3$ is free
  (that is, no player chooses $u_3$).
  If also both $u_1$ and $u_2$ are free,
  then there exists some $P_i$ with at least $2$ occupied vertices
  (by the pigeon-hole principle).
  It is clear that at least one of the players occupying these vertices
  can increase her payoff by moving to $u_3$.
  If $u_1$ and $u_2$ are both occupied,
  then there exists some $P_i$ with at most $1$ occupied vertex
  (again, by the pigeon-hole principle).
  It is clear that the player occupying this vertex
  can increase her payoff by moving to $u_3$.
  If only one vertex out of $u_1$ and~$u_2$ is occupied,
  then the player occupying this vertex can
  increase her payoff by moving to $u_3$.
  Therefore,
  we can assume that $u_3$ is occupied.
  In this case,
  it holds that if both $u_1$ and $u_2$ are free,
  then at least one player has a payoff of $1$,
  and she gains more by moving to $u_2$.
  If both $u_1$ and $u_2$ are occupied,
  then at least one $P_i$ has at most one occupied vertex,
  and this occupied vertex can only be $v_{i, 1}$,
  therefore the player occupying $u_2$ gains more by moving to~$v_{i, 2}$.
  Lastly,
  if exactly one out of $u_1$ and $u_2$ is free,
  then at least one $P_{i_1}$ has at most one occupied vertex,
  and this occupied vertex can only be $v_{i_1, 1}$.
  Moreover,
  at least one $P_{i_2}$ has at least two occupied vertices,
  therefore a player occupying one of these vertices
  gains more by moving to $v_{i_1, 2}$.
  Therefore,
  this graph has no \nash for $k$ players.

  For~$k$ being even,
  we create one $P_2$,
  whose vertices we denote by $u_1$, $u_2$.
  For each $i \in [2, \frac{k}{2}+1]$,
  we create a copy of $P_3$,
  denoted by $P_i$,
  whose vertices we denote by $v_{i, 1}$, $v_{i, 2}$, and $v_{i, 3}$,
  such that $v_{i, 2}$ is the middle vertex of $P_i$.
  For each $i \in [2, \frac{k}{2}+1]$,
  we connect $v_{i, 1}$ to $u_2$.
  This graph has no \nash for $k$ players,
  as can be verified by a similar analysis as above.
\end{proof}

\section{Conclusion}

We studied a competitive diffusion game for three or more players
on several classes of graphs, answering---as a main contribution---an
open question concerning the existence of a \nash for three players on grids \cite{roshanbin14} negatively.
Further, extending previous results on hypercubes~\cite{etesami2014complexity},
we proved that \nashs always exist for four player on $d$-dimensional hypercubes.
With this work, we provide a first systematic study of this game for more than two players.
However, there are several questions left open, of which we mention
some here.

An immediate question (generalizing~\autoref{thm:grids})
is whether a \nash exists for more than three players on a grid.
Computer simulations lead us to the conjecture that there is no \nash for four players on a grid of size larger than~$6\times 6$.
A further immediate question (generalizing~\autoref{thm:hypercubes})
is whether a \nash exists for three players or more than four players on a $d$-dimensional hypercube.
Also, giving a lower bound for the number of vertices~$n$ such that
there is a graph with~$n$ vertices with no \nash for~$k$ players
is an interesting question as it is not clear that the upper bounds
given in~\autoref{thm:general} are optimal.
In other words, is it true that $n \leq \frac{3}{2}k+1$ implies the
existence of a \nash for~$k$ players?

\paragraph*{Acknowledgement.}
We thank Manuela Hopp for helpful discussions and implementations for the result on hypercubes.
Laurent Bulteau was supported by the Alexander von Humboldt Foundation, Bonn, Germany.
Vincent Froese was supported by the DFG, project DAMM (NI 369/13).
Nimrod Talmon was supported by DFG Research Training Group ``Methods for Discrete Structures''~(GRK~1408).

\makeatletter
\renewcommand\bibsection%
{
  \section*{\refname
    \@mkboth{\MakeUppercase{\refname}}{\MakeUppercase{\refname}}}
}
\makeatother

\bibliographystyle{abbrvnat}
\bibliography{bibliography}

\newcommand{\bibremark}[1]{}
\begin{thebibliography}{12}
\providecommand{\natexlab}[1]{#1}
\providecommand{\url}[1]{\texttt{#1}}
\expandafter\ifx\csname urlstyle\endcsname\relax
  \providecommand{\doi}[1]{doi: #1}\else
  \providecommand{\doi}{doi: \begingroup \urlstyle{rm}\Url}\fi

\bibitem[Alon et~al.(2010)Alon, Feldman, Procaccia, and Tennenholtz]{AFPT10}
N.~Alon, M.~Feldman, A.~D. Procaccia, and M.~Tennenholtz.
\newblock A note on competitive diffusion through social networks.
\newblock \emph{Information Processing Letters}, 110\penalty0 (6):\penalty0
  221--225, 2010.

\bibitem[D{\"u}rr and Thang(2007)]{durr2007nash}
C.~D{\"u}rr and N.~K. Thang.
\newblock {N}ash equilibria in {V}oronoi games on graphs.
\newblock In \emph{Proceedings of the 15th Annual European Symposium on
  Algorithms (ESA'07)}, volume 4698 of \emph{LNCS}, pages 17--28. Springer,
  2007.

\bibitem[Etesami and Basar(2014)]{etesami2014complexity}
S.~R. Etesami and T.~Basar.
\newblock Complexity of equilibrium in diffusion games on social networks.
\newblock In \emph{Proceedings of the 2014 American Control Conference}, pages
  2065--2070, 2014.

\bibitem[Ito et~al.(2015)Ito, Otachi, Saitoh, Satoh, Yamanaka, and
  Zhou]{ito2015competitive}
T.~Ito, Y.~Otachi, T.~Saitoh, H.~Satoh, K.~Yamanaka, and X.~Zhou.
\newblock Competitive diffusion on weighted graphs.
\newblock In \emph{Proceedings of the 14th International Symposium on
  Algorithms and Data Structures (WADS '15)}, volume 9214 of \emph{LNCS}, page
  422, 2015.

\bibitem[Janssen and Vautour(2015)]{janssen2014finding}
J.~Janssen and C.~Vautour.
\newblock Finding safe strategies for competitive diffusion on trees.
\newblock \emph{Internet Mathematics}, 11\penalty0 (3):\penalty0 232--252,
  2015.

\bibitem[Kempe et~al.(2003)Kempe, Kleinberg, and Tardos]{kempe2003maximizing}
D.~Kempe, J.~Kleinberg, and {\'E}.~Tardos.
\newblock Maximizing the spread of influence through a social network.
\newblock In \emph{Proceedings of the ninth ACM SIGKDD international conference
  on Knowledge discovery and data mining}, pages 137--146. ACM, 2003.

\bibitem[Mavronicolas et~al.(2008)Mavronicolas, Monien, Papadopoulou, and
  Schoppmann]{mavronicolas2008voronoi}
M.~Mavronicolas, B.~Monien, V.~G. Papadopoulou, and F.~Schoppmann.
\newblock {V}oronoi games on cycle graphs.
\newblock In \emph{Proceedings of the 33rd International Symposium on
  Mathematical Foundations of Computer Science (MFCS'08)}, volume 5162 of
  \emph{LNCS}, pages 503--514. Springer, 2008.

\bibitem[Roshanbin(2014)]{roshanbin14}
E.~Roshanbin.
\newblock The competitive diffusion game in classes of graphs.
\newblock In \emph{Proceedings of the 10th International Conference on
  Algorithmic Aspects in Information and Management (AAIM'14)}, volume 8546 of
  \emph{LNCS}, pages 275--287. Springer, 2014.

\bibitem[Small(2012)]{small2012information}
L.~Small.
\newblock \emph{Information Diffusion on Social Networks}.
\newblock PhD thesis, National University of Ireland Maynooth, 2012.

\bibitem[Small and Mason(2013)]{SO13}
L.~Small and O.~Mason.
\newblock {N}ash equilibria for competitive information diffusion on trees.
\newblock \emph{Information Processing Letters}, 113\penalty0 (7):\penalty0
  217--219, 2013.

\bibitem[Takehara et~al.(2012)Takehara, Hachimori, and Shigeno]{THS12}
R.~Takehara, M.~Hachimori, and M.~Shigeno.
\newblock A comment on pure-strategy {N}ash equilibria in competitive diffusion
  games.
\newblock \emph{Information Processing Letters}, 112\penalty0 (3):\penalty0
  59--60, 2012.

\bibitem[Tzoumas et~al.(2012)Tzoumas, Amanatidis, and
  Markakis]{tzoumas2012game}
V.~Tzoumas, C.~Amanatidis, and E.~Markakis.
\newblock A game-theoretic analysis of a competitive diffusion process over
  social networks.
\newblock In \emph{Proceedings of the Eighth Workshop on Internet and Network
  Economics (WINE'12)}, volume 7695 of \emph{LNCS}, pages 1--14. Springer,
  2012.

\end{thebibliography}

\end{document}